\pdfoutput=1 
\ifdefined\IsLLNCS
\documentclass{llncs}
\usepackage{amsfonts,amsmath,amssymb,boxedminipage,color,url,nccmath}
\pagestyle{plain}
\else
\documentclass[11pt]{article}
\usepackage{amsfonts,amsmath,amssymb,boxedminipage,color,url,fullpage,nccmath}
\fi
\usepackage{enumerate,paralist}
\usepackage[hypcap=false]{caption}
\usepackage{lmodern}
\usepackage[bottom]{footmisc} 
\usepackage[pageanchor=false,pdfstartview=FitH,colorlinks,linkcolor=blue,filecolor=blue,citecolor=blue,urlcolor=blue]{hyperref} 
\ifdefined\IsLLNCS\else
\usepackage{amsthm} 
\fi
\usepackage{aliascnt}
\usepackage[numbers]{natbib} 
\usepackage{cleveref}
\usepackage{xspace}
\usepackage{xstring}
\usepackage{float}
\usepackage{color,soul}
\usepackage{tikz}\usetikzlibrary{arrows}
\usetikzlibrary{decorations.markings}
\usepackage{subcaption}
\ifdefined\IsLLNCS
\setcounter{secnumdepth}{3} 
\fi
\setcounter{tocdepth}{4}
\setcounter{secnumdepth}{3}

\usepackage{algorithm}
\usepackage[noend]{algpseudocode}
\usepackage{nicefrac}

\newtheorem{theorem}{Theorem}
\newtheorem*{theorem*}{Theorem}
\newtheorem{corollary}{Corollary}
\newtheorem{lemma}{Lemma}
\newtheorem{claim}{Claim}

\newtheorem{remark}{Remark}

\begin{document}
	
\title{Resilience of Networks to Spreading Computer Viruses: Optimal Anti-Virus Deployment (Extended Version\footnote{A shorter version of this paper was published at IEEE/IFIP Network Operations and Management Symposium (NOMS), 2023, titled \textit{"Resilience of Networks to Spreading Computer Viruses: Optimal Strategies for Anti-Virus Deployment"} \cite{Tavori2023NOMS}. This extended version includes additional analyses, detailed proofs, and expanded experimental results.} )}

\author{
Jhonatan Tavori
\thanks{Blavatnik School of Computer Science, Tel Aviv University, Israel. E-mail: \texttt{jhonatan.tavori@cs.tau.ac.il}}
\and 
Hanoch Levy
\thanks{Blavatnik School of Computer Science, Tel Aviv University, Israel. E-mail: \texttt{hanoch@tauex.tau.ac.il}}}

\date{}

\maketitle

\begin{abstract}

Deployment of anti-virus software is a common strategy for preventing and controlling the propagation of computer viruses and worms over a computer network. As the deployment of such programs is often limited due to monetary or operational costs, devising optimal strategies for their allocation and deployment can be of high value to the operation, performance, and resilience of the target networks.

We study the effects of anti-virus deployment (i.e., “vaccination”) strategies on the ability of a network to block the spread of a virus. Such ability is obtained when the network reaches “herd immunity”, achieved when a large fraction of the network entities is immune to the infection, which provides protection even for entities which are not immune. We use a model that explicitly accounts for the inherent heterogeneity of network nodes activity and derive optimal strategies for anti-virus deployment.

Numerical evaluations demonstrate that the system performance is very sensitive to the chosen strategy, and thus strategies which disregard the heterogeneous spread nature  may perform significantly worse relatively to those derived in this work.
\\
\\
\textbf{Keywords:} Network Resilience; Stochastic Spreading; Computer Viruses; Resources Allocation; Convex Optimization;

\end{abstract}

\section{Introduction}

Computer viruses, worms and other self-replication malicious programs form a persistent security threat on the Internet since the “Creeper” virus propagated through the ARPANET in 1971 \cite{afianian2019malware}. Such malwares take advantage of the communication between the network entities to spread themselves, in an automated fashion, thus allowing infected computers to infect susceptible computers \cite{hansman2005taxonomy,chen2007optimal,kim2004measurement}. Their damage potential is crucial as they can potentially spread across large-scale networks within short periods of time. Virus attacks disrupted hundreds of millions of systems, and cost both the public and private sectors billions of dollars \cite{neubauer2002protection}.

The analysis and characterization of viral spreading processes is important both to evaluate their potential damage, and for the devise of efficient counter measures to be taken by individuals or a network administrator \cite{yang2012computer,peng2013smartphone,neubauer2002protection}.

In this work we focus on studying strategies for the deployment of anti-virus programs (or scanners, HIDS, etc.) over the devices in a network. Installations of anti-virus software, which we refer to as “vaccination mechanisms”, is commonly known for protecting individual hosts against viruses. Nonetheless, as demonstrated in the recent COVID-19 crisis, a vaccination not only protects the individual computer being vaccinated, but it also contributes to protecting the entire network by contributing to reducing the network’s viral spread.

Our interest is in an operator who faces a multi-region environment, {such as an organization consisting of multiple international separate sites, or a university consisting of multiple (weakly connected) departments}. Such an operator aims at optimizing the deployment of a limited number of anti-virus licenses over these sites.  The limitation on the number of licenses stems from budget or human resources limitations or simply from the difficulty of getting all users to cooperate and install the program on their equipment. 

Given a limited supply of anti-virus program licenses, how should the licenses be distributed over a multi-region computer system, in order to maximize the resilience of the network to a potential attack? 
Should the efforts be concentrated,
namely, select a small set of regions and invest the entire budget and efforts on allocating and installing anti virus programs on their devices to reach full protection on the selected group;  or, rather, be scattered, namely, divide the efforts across the system, providing partial protection to each region?  
Further, what would be the optimal deployment strategy in a system where the virus has already started spreading? Will the interleaving of the natural infection process with the vaccination process affect the deployment strategy? 
These are some of the questions that motivate this research and are at its core.

\subsection{Utilizing Pandemic Studies and Models}

Viral spreading processes are common to computer viruses and worms spreading over computer networks, information broadcasts or spread of rumors over social networks, and to infectious diseases spreading over the human network.

A common property shared by computers communication and humans’ interaction is that both are known to be characterized by heterogeneity of connectivity and activity. Such heterogeneity has been recognized and studied for quite a while; 
see e.g., \cite{chen2007optimal,kim2004measurement,peng2013smartphone, adiga2016delay,draief2006thresholds, adamic2002zipf, lloyd2001viruses,tuchner2020bullshit} in the computer networking literature, and \cite{pastor2015epidemic, rock2014dynamics, smith2005} in the human biology literature. 

The ongoing COVID-19 pandemic has made it possible to observe this heterogeneity phenomenon conspicuously, as heterogeneity of connectivity, and possibly other properties, translated to heterogeneity of infectiousness and susceptibility. This has been expressed in the form of "super-spreaders", a small fraction of the population which is responsible to a large fraction of the infection cases \cite{endo2020estimating}. Due to the recent high interest in pandemics, the effect of heterogeneity across the population on the progression of epidemic processes, the acquirement of herd immunity, and counter-measure strategies was subject to a recent research \cite{britton2020mathematical,oz2021heterogeneity, Tavori02012024}.

A striking result of recent studies has shown that heterogeneity across the network entities may have a drastic effect on the viral process progression and may reduce dramatically the number of infected individuals prior to reaching herd immunity. This dramatic effect is rooted in the behavior of the stochastic process in which super-spreading causes the process to progress in a biased way. Roughly speaking, the existence of super-spreaders (i.e., nodes with high level of infectiousness/susceptibility) yields that they stochastically tend to get infected and develop immunity or "leave the game" in an early stage of the epidemic process, leading to a decrease in the viral spread (rate of new infections) within the network.

In particular, it was shown \cite{oz2021heterogeneity,tavorilevyrconvex} that the well-known reproduction number $R$, which measures the viral spread of the attack within the network (defined as the expected number of secondary infections produced by an infected node) decreases in \textit{convex} manner as the virus spreads in heterogeneous populations. This is in contrast to the traditional \textit{linear} reduction of $R$ in homogeneous populations. Thus, in heterogeneous populations the viral spread drops drastically at early stages of the spread and, in comparison to homogeneous networks, the infection process comes to an earlier end, expressed by reaching herd immunity earlier (as $R$ drops below 1). 

{In light of these findings, we adopt the "herd immunity" term from the biological domain and apply it in the computer networks domain. The importance of herd immunity is that it marks the turnaround point of the virus spread in the network, as reaching it causes the number of new infections to decline and the spread gets under control. This turnaround point, which is termed Herd Immunity Threshold (HIT), can be derived analytically in many cases and we aim at minimizing it. In addition, we observe in our simulation experiments that deployment strategies which minimize the HIT minimize the overall number of infections as well.}

We analyze the stochastic progression of the process when the system is subject to anti-virus deployment (i.e., vaccination process). We aim at understanding how the biased progression of the spread, as revealed by \cite{britton2020mathematical,oz2021heterogeneity,tavorilevyrconvex} can be utilized to guide network administrators in limiting the scope of the malware spread within their network.

\subsection{Our Contributions and Structure of the Paper}
Our analysis accounts for an arbitrary spreading distribution, which allows a general level of heterogeneity across the network's entities. 

We assume that the \textit{statistics} of spreading level in the regions are available to the operator who decides on the quantities (and timing) of deployment in each region, while 
the spreading levels of \textit{individual} nodes are not available.
Of course, if one {knows} who the heavy spreaders are and can target them, then the trivial strategy of vaccinating nodes in the order of their spreading level seems to be optimal. Nonetheless, this targeting is not always possible as the interaction levels across the network may vary over time and such knowledge may not be in the hands of the operator.  Thus, we analyze deployments in which the allocation \textit{within} a region is done in a random fashion across its nodes. 

We begin (Section \ref{sec:sec3}, following a formal presentation of the model in Section \ref{sec:sec2}) with analyzing a setting where the anti-virus deployment occurs prior to the beginning of the virus attack ("offline" scenario). The optimal strategy aims at  minimizing the future damage of the attack when it is launched, or even at blocking it.
We prove that the marginal benefit from each additional anti-virus installed in the region is decreasing in the total number of installations deployed. This implies that the optimization problem is a convex one, and thus the optimal deployment policies are not concentrated on specific regions.
We use the convexity to propose an efficient algorithm which, based on the spreading distributions of the regions,  allocates a given supply of licenses among the regions. The algorithm works in a greedy manner and yields the optimal solution efficiently.

In Section \ref{sec:sec4} we address the optimal deployment strategy for a system in which the virus attack has already started spreading ("online" scenario). 
In many practical cases the deployment of anti-virus licenses (or part of it) must be carried out while the virus attack is on. This yields a stochastic process that consists of the interleaving of the viral spreading process and of the anti-virus deployment process. Thus, to derive optimal deployment one must analyze this combined process.

An intriguing question, with no obvious answer, is whether delaying vaccinations may be beneficial. On the one hand the basic intuition says that pushing the vaccination process early can only do good. On the other hand, one notices that the natural spreading process is biased towards “catching” heavy spreaders and “take them out of the game” early in the process, while the vaccination process is conducted uniformly (no bias towards heavy spreaders). Thus one might see an advantage of postponing the vaccination process to let the natural spreading “catch” the heavy spreaders.  

We analyze the combined process and establish a monotonicity property: any delay in the deployment of a given batch of anti-virus installations, will result with an increased number of infections prior to reaching herd immunity.
Thus, one should not postpone vaccinations. Using the monotonicity property we apply the greedy approach of Section \ref{sec:sec3} and propose a deployment strategy for the online scenario.

In Section \ref{sec:sec5} we use simulative experiments to evaluate the sensitivity of the system to the chosen deployment strategy. We demonstrate that the number of infections prior to reaching herd immunity,
and the overall number of infected nodes, 
is very sensitive to the chosen allocation, especially due to heterogeneity. We demonstrate that allocations which disregard heterogeneity may perform significantly worse (up to tens of percents) relatively to heterogeneity-accounting allocations (like the one proposed in this paper). 
Further work and concluding remarks are given in Section \ref{sec:sec6}.

\section{{The Model and Prior Fundamental Results}}\label{sec:sec2}

In this section we provide a formal description of the spreading process, along with a short review of prior fundamental results that our analysis relies upon.

The \textit{Susceptible-Infective-Recovered} (SIR) model is a standard model of epidemic processes \cite{lloyd2001viruses,pastor2015epidemic,newman2003structure}, and has been used to model the spread of computer viruses and worms as well \cite{peng2013smartphone,amador2014stochastic}.

We adopt this model with the extension of \textit{Vaccinations}, SVIR, as follows:
Initially, all nodes are assumed to be susceptible (\textbf{S}). 
As a result of an \textit{infection}, a susceptible node become infective (\textbf{I}). 
During their infectious period, susceptible nodes may infect other nodes they contact with.
After the infectious period, infected nodes "leave the game", i.e., remove/recover (\textbf{R}) and {stop spreading the virus} (e.g., being patched or as the infection mechanism gets "exhausted" after trying to infect all its neighbors or a timeout has elapsed).
Additionally, susceptible nodes can move directly to being vaccinated (e.g., by deploying an anti-virus program) (\textbf{V}), meaning that they can not get infected or infect others. This is demonstrated in Figure 1.

We will use the terms "vaccine" and "anti-virus" (or "anti -virus license") interchangeably.
\begin{figure}[h]%
	\centering
	\subfloat
	{{\includegraphics[width=0.45\linewidth]{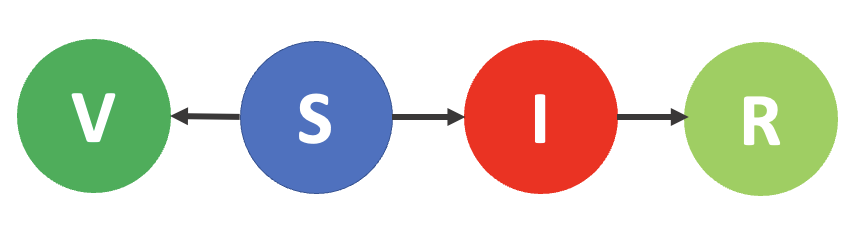} }}%
	\caption{\textbf{S}usceptible-\textbf{I}nfective-\textbf{R}ecovered or \textbf{V}accinated Model.}%
	\label{fig:example}%
\end{figure}
\subsection{Heterogeneity of Spreading}

We consider a network of nodes $V$, of size $\vert V \vert = N$.
At the beginning of the process, each node $u \in V$ is assigned with susceptibility and infectiousness parameters (or weights), $S(u)$ and $I(u)$;
\footnote{Practically -- these values can be produced by sampling a population. If one wants to attribute them to a specific theoretical distribution, say Gamma (,) one may draw each of the values from that distribution and obtain $\mathcal{S}$ and $\mathcal{I}$ that approximate the theoretical distribution very well (especially in practical situations where population sizes are in millions).}.
drawn from $\mathcal{S}$ and $\mathcal{I}$. 
these values accompany $u$ throughout the entire process.
The ensemble of the values over all $u \in V$ forms the \textit{susceptibility} and \textit{infectiousness} \textit{distributions}, or \textit{statistics}, of the network, $\mathcal{S}$ and $\mathcal{I}$ (where $ \text{Supp} (\mathcal{S}) = \text{Supp}(\mathcal{I}) = [0,1]$).

Assuming that $u_i \in V$ is infective, the probability that it will infect the susceptible node $u_j \in V$ is:
\begin{equation}
	\Pr[\text{$u_i$ infects $u_j$}] = I(u_i) \cdot S(u_j).
\end{equation}
$I(u_i)$ represents the probability of $u_i$ to spread the virus to others when it is in infective mode; $S(u_j)$ represents the probability of $u_j$ to attract the infection. 

The interaction network can be viewed as a random graph, where 
edge from $u_i$ to $u_j$ is present with probability $I(u_i)\cdot S(u_j)$, and $I(u)$ and $S(u)$ are proportional to the expected out-degree and in-degree of $u$.

Note that $S(u)$ and $I(u)$ implicitly include the interaction level of $u$ (probability of communicating with other nodes) as well as any other properties of $u$ (such as its software properties, etc.). Thus, throughout this work we allow those distribution to be arbitrary, in order to incorporate any level of heterogeneity across the network entities.
The only assumption we take on $\mathcal{S}$ and $\mathcal{I}$ is that more susceptible nodes are stochastically more infectiousness\footnote{Formally, we assume that their sampling is positively correlated such that $\varphi(s) := \mathop{\mathbb{E}} _{u \in V}  \left[ I(u) \: \vert \: S(u) = s \right]$  is monotonically non-decreasing in $s$. Of course, the case of equality $I(u)=S(u)$ for any $u \in V$ is a special case of this definition. The case where the infectiousness is independent of susceptibility (i.e., $\varphi(s_1) = \varphi(s_2)$ for any $s_1 \neq s_2$) is a special case as well.} (a lighter assumption than assuming that $S$ and $I$, the in- and out- degrees, are equal).

\subsection{The Effective Reproduction Number and Herd Immunity}

In order to evaluate the propagation rate of the spread, we track the reproduction number. The \textit{basic} reproduction number, $R_0$, is a a measure of how transferable the virus is in the network. 
It is defined as the expected number of secondary cases produced directly by the first infected node. 

As the spread continues, varying proportions of the population are recovered/removed (\textbf{R}) or vaccinated (\textbf{V}) at any given time, and the size of the susceptible population decreases. Thus, the expected number of secondary cases produced directly by an infected node will reduce over time.
The \textit{effective} reproduction number, $R(n)$, denotes the expected number of infections directly generated by the $n$th infected node\footnote{We index $R$ as a function of the number of nodes which contracted the virus. The event whereby the $n$th infection occurs is called the $n$th \textit{step} of the spread, and the number of steps is upper-bounded by $N$. This indexing method will be useful in our analysis since we aim at minimizing the number of infections prior to the herd immunity. }:
\begin{equation*}\label{eq:rorigin}
R(n) = \mathbb{E} \left[ \text{\# of infections generated by the $n$th infected node} \right]
\end{equation*}
where the expectation is taken over {all nodes and all possible scenarios of infections}. 

The decrease of the reproduction number is a key concept in SIR spreading processes.
As long as $R$ obeys $R>1$, then the number of infections increases in an exponential manner (which makes the spread much more difficult for the network operator to cope with).
When $R < 1$, the number of infection cases decreases, and the spreading process comes to an end (i.e., herd immunity is achieved). 
The fraction of the population that contracted the spread before $R$ reaches $1$ is called the \textit{Herd Immunity Threshold} (or HIT). 
Figure \ref{fig:example2} demonstrates herd immunity achieved by vaccinations. 

\begin{figure}[h]%
	\centering
	\subfloat
	{{\includegraphics[width=0.55\linewidth]{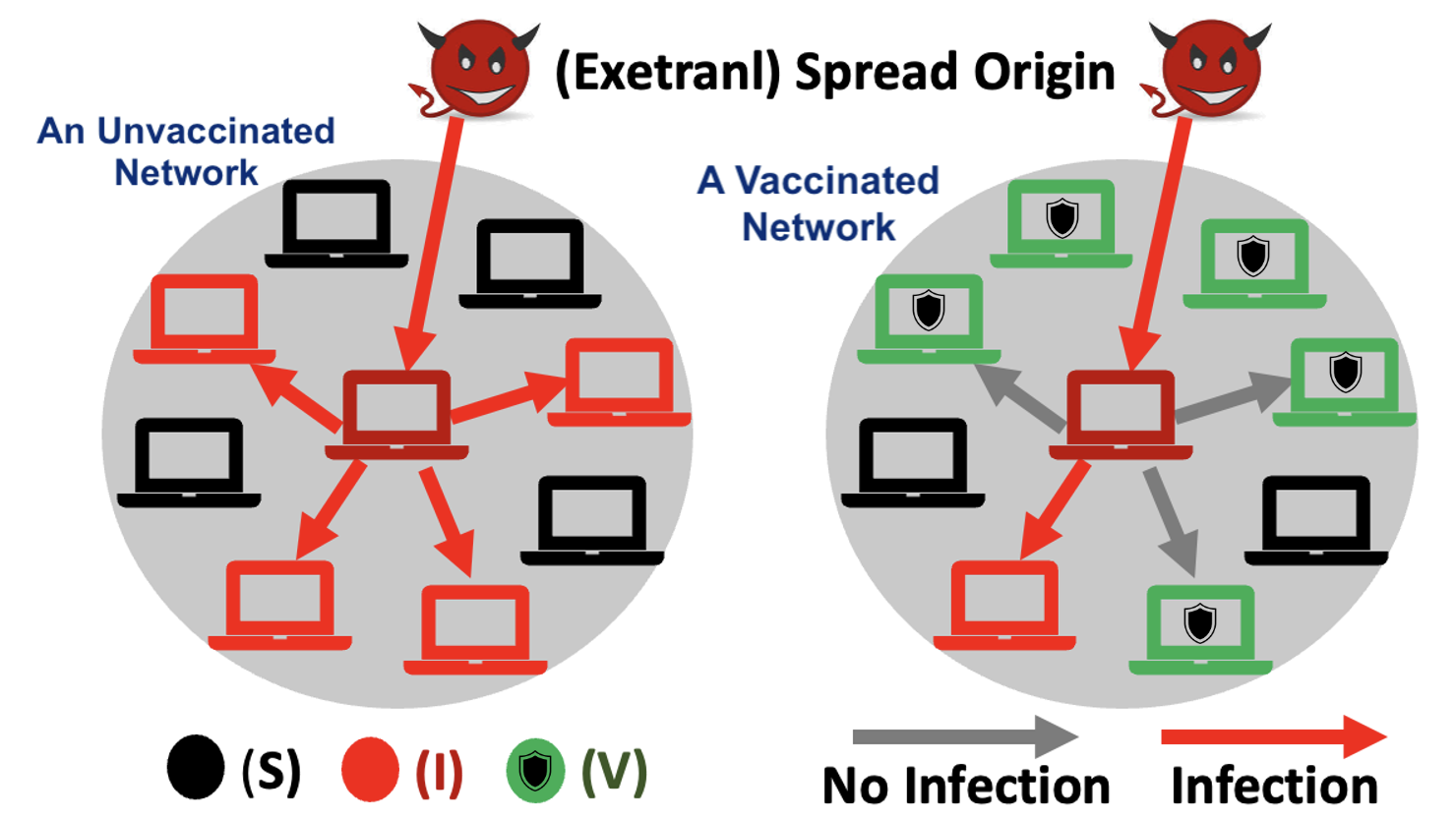} }}
	\caption{Herd immunity achieved by installing anti-virus programs (i.e., vaccination).}
	\label{fig:example2}
\end{figure}

\subsection{Prior {Fundamental} Results}
Our analysis will be based on a prior result \cite{tavorilevyrconvex} establishing that $R$ is convex. 
For the sake of completeness (but not needed to follow our analysis), in this sub-section we briefly describe the {assumptions} used in the model and explain where the convexity of $R$ stems from. 

In a homogeneous population, it was "traditionally" known to have a linear reduction of $R$ (i.e., when 50\% of the entities were immune, then $R$ was cut in half).  However, it was recently established that in heterogeneous populations $R$ decreases in a convex manner, see Figure 3. The reason is that the correlation between $I(u)$ and $S(u)$ implies that highly infective nodes are more susceptible and therefore get infected early in the process, "leave the game" and lead to sharp decrease in $R$ {in early stages}, and to the convexity of $R$.  

\begin{figure}[h]%
	\centering
	\subfloat
	{{\includegraphics[width=0.9\linewidth]{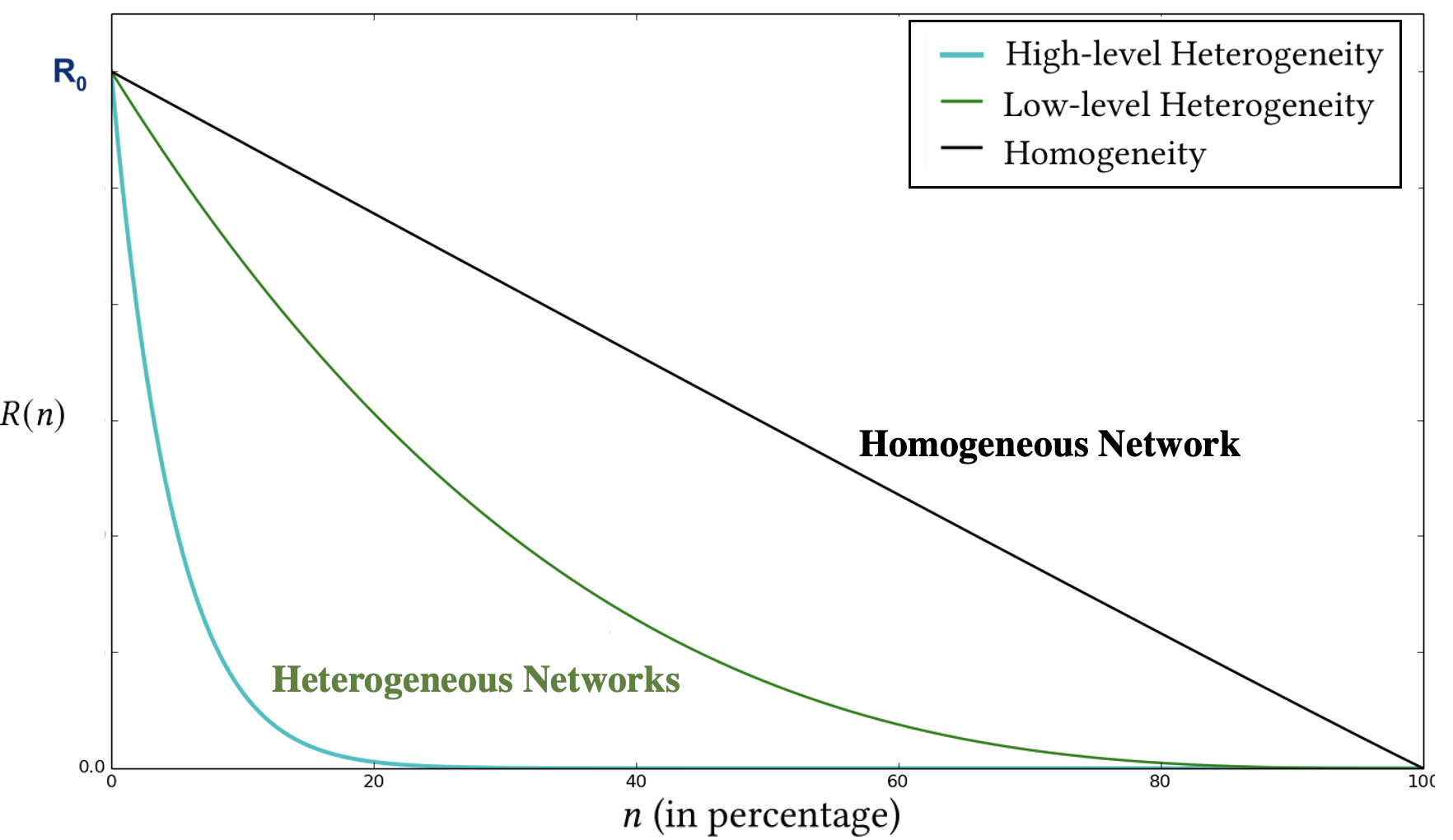} }}
	\caption{The reduction of $R$ in Homo/Hetero-geneous networks, as a function of $n$, the number of infections (in percentage).}
	\label{fig:example3}
\end{figure}

The {distribution} of the susceptibility and infectiousness parameters across the susceptible nodes may change throughout the spread, as infected/vaccinated nodes are removed from the susceptible population.
Let $\textbf{V}_n$ denote the susceptible nodes after $n$ infections (a subset of the susceptible nodes at step $n-1$). $\textbf{V}_n$ is a random variable distributed over all possible scenarios of infection. 

By tracking the probability density functions of $\textbf{V}_n$ for $n=0,1,\dots$, and establishing stochastic dominance relations between them (which reflect the fact that the heavy spreaders stochastically tend to get infected earlier than light spreaders), it was established that:

\begin{theorem}[Convexity of $R$ \cite{tavorilevyrconvex}]
	$R(n)$ is convex in $n$.
\end{theorem}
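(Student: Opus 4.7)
The plan is to rewrite $R(n)$ as an expectation involving the random susceptible set $\mathbf{V}_n$ and the infectiousness of the $n$-th infected node, and then prove convexity by bounding the second-order finite difference. Since the node infected at step $n$ produces secondary infections by independently trying to contaminate each currently susceptible neighbor with per-pair probability $I(\cdot)\,S(\cdot)$, one gets
\begin{equation*}
R(n)=\mathbb{E}\!\left[I(u_n)\sum_{w\in\mathbf{V}_n}S(w)\right],
\end{equation*}
where $u_n$ is the $n$-th infected node and the expectation ranges over all histories of the spread up to step $n$. Discrete convexity then reduces to showing $R(n-1)-2R(n)+R(n+1)\geq 0$, equivalently that the successive decrements $R(n-1)-R(n)$ are non-increasing in $n$.

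The crux of the argument is a stochastic ordering between the residual susceptible populations $\mathbf{V}_0\supseteq\mathbf{V}_1\supseteq\cdots$. Because the conditional probability that a particular susceptible node $w$ is the next to get infected is proportional to $S(w)$ (weighted by the current infective's $I$), size-biased sampling removes high-$S$ nodes at a disproportionately high rate; by the monotonicity assumption $\varphi(s)=\mathbb{E}[I(u)\mid S(u)=s]$, these are also the high-$I$ nodes in expectation. I would formalize this as a likelihood-ratio dominance of the empirical joint $(S,I)$-distribution on $\mathbf{V}_{n-1}$ over that on $\mathbf{V}_n$, and then couple one-step transitions $\mathbf{V}_{n-1}\to\mathbf{V}_n$ and $\mathbf{V}_n\to\mathbf{V}_{n+1}$ so that the contribution lost from $R$ at step $n$ stochastically dominates the contribution lost at step $n+1$. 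The positive correlation between $\mathcal{S}$ and $\mathcal{I}$ enters essentially here: it forces both the infectiousness of the newly removed node and the susceptibility mass removed from the pool to shrink simultaneously as $n$ grows, so the decrement $R(n-1)-R(n)$ shrinks on both factors at once.

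The main obstacle is propagating this bias cleanly across multiple steps, since $\mathbf{V}_n$ is the product of an $n$-fold iteration of a random removal whose law depends on the prior state. The clean route is a coupling argument that runs two copies of the process in parallel and compares them pointwise on the joint probability space of the sampling and of $(S,I)$; under this coupling I would show that every realization of ``remove one more node and sum the remaining weighted mass'' contributes less on the later copy than on the earlier one. Making this comparison preserve the required inequality under the size-biased, multi-step dynamics is the delicate part of the argument, and is the place where the full strength of the joint monotonicity of $\mathcal{S}$ and $\mathcal{I}$ is used; everything else is a routine expectation manipulation atop this coupling.
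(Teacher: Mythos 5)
Your strategy is, in outline, the same one this paper relies on: note first that the paper does not prove Theorem 1 at all --- it imports it from \cite{tavorilevyrconvex}, and its one-paragraph sketch (``tracking the probability density functions of $\mathbf{V}_n$ \dots and establishing stochastic dominance relations between them,'' reflecting that heavy spreaders tend to get infected early) is exactly your program: write $R(n)=\mathbb{E}\bigl[I(u_n)\sum_{w\in\mathbf{V}_n}S(w)\bigr]$, observe that infection is size-biased in $S$, invoke the monotonicity of $\varphi(s)=\mathbb{E}[I(u)\mid S(u)=s]$ to transfer the bias to $I$, and order the residual susceptible distributions. So at the level of decomposition and key lemma you have reconstructed the cited argument correctly.

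As a proof, however, there is a genuine gap, and it sits precisely where you flag ``the delicate part.'' First-order dominance of the (annealed) law of $\mathbf{V}_{n-1}$ over that of $\mathbf{V}_n$ yields only monotonicity, $R(n-1)\ge R(n)$; convexity is a second-order statement, and the decrement $R(n-1)-R(n)$ is not simply the susceptibility mass of one removed node: the law of the infector $u_n$ (itself size-biased and drifting toward lower $I$ via $\varphi$) and the pool sum $\sum_{w\in\mathbf{V}_n}S(w)$ change simultaneously and are dependent through the shared history, so comparing successive decrements requires comparing two one-step transitions launched from \emph{differently distributed random states} --- this is the actual content of the cited proof, not a routine afterthought. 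Moreover, the coupling you propose is stated too strongly: it cannot hold realization-by-realization that the later transition ``contributes less'' (by chance the later copy may remove a heavier node than the earlier one), so the comparison can only be distributional, and you give no construction showing that likelihood-ratio dominance of the empirical $(S,I)$-distribution is even preserved under one step of size-biased removal (closure of the ordering under the dynamics is itself a lemma that needs proof). A sanity check any completed version must pass: in the homogeneous case $R$ decreases exactly linearly, so every inequality in your scheme must collapse to equality when $S$ and $I$ are constant; a step producing strict slack there is wrong. As written, the proposal identifies the right quantities and the right ordering but asserts, rather than proves, the inequality between successive decrements.
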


\section{{The effects of vaccinations prior to the spreading process}}\label{sec:sec3}
In this section we study the "offline" scenario where the deployment occurs prior to the beginning of the virus attack. In Section \ref{sec:sec4} we will analyze a deployment which occurs during the spreading process. 

{As discussed in the introduction}, 
we assume that the operator is only aware of the spreading level {statistics} in the different regions, and needs to decide the quantities to deploy in each region, while the deployment \textit{within} each region is done in an "oblivious" random fashion across its nodes.

We begin (Subsection \ref{subsec:32}) with proving that  
{within each region, the herd immunity threshold resulted from the vaccination process is convex in the number of vaccines deployed in the region (Theorem \ref{thm:hitconvex}).
In Subsection \ref{secvaccineconvex} we use this result and propose a greedy algorithm for deriving the optimal allocation over multiple regions.}

\subsection{Convexity of The HIT in The Number of Vaccines Deployed}\label{subsec:32}

{We consider a network of $N$ nodes in which $x$ anti-virus licences where deployed uniformly (i.e., $x$ random nodes were vaccinated). Afterwards, a virus has attacked the network. Denote by $H(x)$ the expected\footnote{The expectation is taken over all scenarios of infections.} number of nodes which were  infected prior to reaching herd immunity. We prove that:}
\begin{theorem} \label{thm:hitconvex}
	{The expected number of infections prior to reaching herd immunity, $H(x)$, is convex in the number of anti-virus deployed in the region, $x$, for any $x>0$.}
\end{theorem}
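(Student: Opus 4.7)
The plan is to leverage the prior convexity result for $R(n)$ (the displayed theorem of \cite{tavorilevyrconvex}) via a distributional reduction that exploits the unbiased nature of uniform vaccinations, and then to obtain convexity of the HIT by a joint analysis of the reproduction number.

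The first step is a distributional reduction. Because vaccinations are chosen uniformly at random and independently of the weights $\{(S(u),I(u))\}$, the $N-x$ remaining susceptibles form a uniformly random subset of the original nodes, whose weights are i.i.d.\ from $(\mathcal{S},\mathcal{I})$. Hence the post-vaccination spreading process has the same law as the original unvaccinated spread on a fresh network of size $N-x$. Writing $h(M)$ for the expected HIT on a size-$M$ network, this gives $H(x)=h(N-x)$; since $N-x$ is affine in $x$, convexity of $H$ in $x$ is equivalent to convexity of $h$ in $M$, and the problem reduces to showing that the HIT grows convexly with the network size.

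The second step is to deduce convexity of $h$ from the structure of $R_M(n)$ (the reproduction number at step $n$ on a size-$M$ network). Theorem~1 gives convexity of $R_M(n)$ in $n$ for each fixed $M$, and $h(M)$ is the implicit solution of $R_M(h(M))=1$. My plan is to establish joint convexity of $R_M(n)$ in the pair $(n,M)$ together with its monotonicity in $n$; these two properties make the sub-level set $\{(n,M):R_M(n)\le 1\}$ convex and equal to the epigraph of $h$, which forces $h$ to be convex.

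The main obstacle I expect is establishing joint convexity of $R_M(n)$ in $(n,M)$, because the $M$-dependence is not a simple linear rescaling of the $n$-axis: it enters through both the number of potential targets and, via biased infections, the evolving weight-density of the susceptible set $\mathbf{V}_n$. To overcome this I would extend the stochastic-dominance and density-tracking machinery of \cite{tavorilevyrconvex} so that the distribution of $\mathbf{V}_n$ is controlled as a function of $(n,M)$ simultaneously, and then verify the discrete joint-convexity inequalities for $R_M(n)$. A complementary route, pursued in parallel, is to prove submodularity of the ``infections prevented'' set function on the vaccinated set $S\subseteq V$; averaging over uniform subsets of size $x$ would then directly yield the discrete diminishing-returns inequality $H(x-1)-H(x)\ge H(x)-H(x+1)$, i.e., convexity of $H$.
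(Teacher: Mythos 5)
Your first step is sound and matches the paper's starting point: because deployment is uniform and independent of the weights, the $N-x$ survivors have weights i.i.d.\ from $(\mathcal{S},\mathcal{I})$, so $H(x)=h(N-x)$ where $h(M)$ is the expected pre-herd-immunity infection count on a fresh size-$M$ network. The gap is in your second step, where the entire difficulty is parked: you never establish joint convexity of $R_M(n)$ in $(n,M)$ --- you only announce a plan to ``extend the stochastic-dominance machinery'' --- and your diagnosis of the obstacle is mistaken. In this model the $M$-dependence \emph{is} an exact linear rescaling of both axes: uniform removal preserves the spreading distribution, the density evolution of $\mathbf{V}_n$ depends only on the \emph{fraction} infected, and the per-pair infection probability $I(u_i)\cdot S(u_j)$ is size-independent so $R_0$ scales linearly in $M$. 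Hence $R_M(m)=\frac{M}{N}\,R\!\left(\frac{N}{M}\,m\right)$, which is precisely the paper's Eq.~(5), $\hat{R}(x;\,x+(N-x)\cdot p)=\frac{N-x}{N}\cdot R(N\cdot p)$, with $M=N-x$. Without this identity your sub-level-set argument has nothing to stand on; with it, the joint convexity you want is immediate, since $\frac{M}{N}R(Nm/M)$ is (up to an affine change of variables) the \emph{perspective} of the convex function $R$, and the perspective of a convex function is jointly convex in $(m,M)$ for $M>0$. Combined with the monotone decrease of $R_M$ in $m$, the set $\{(m,M):R_M(m)\le 1\}$ is then convex and coincides with the epigraph of $h$, so $h$ is convex in $M$ and $H(x)=h(N-x)$ is convex in $x$ by affine composition.

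The repaired version of your argument would in fact be a cleaner route than the paper's, which instead works with the relative drop $r(x)=\frac{\hat R(x;x)-1}{\hat R(x;x)}$ and proves two computational claims --- a ratio bound $\frac{h(x+1)-h(x+2)}{h(x)-h(x+1)}\le\frac{r(x+1)-r(x+2)}{r(x)-r(x+1)}$ obtained from the convexity of $R$, and the exact identity $\frac{r(x+1)-r(x+2)}{r(x)-r(x+1)}=\frac{N-x}{N-x-2}$ --- before reassembling the marginal-returns inequality. But as submitted, your proposal defers its one essential lemma to unspecified future machinery and explicitly denies the scaling structure that makes the lemma true. Your fallback route via submodularity of the ``infections prevented'' set function is likewise unsupported: submodularity of epidemic outcomes under node removal is delicate and can fail in general interaction structures, and nothing in this paper's framework or in \cite{tavorilevyrconvex} supplies it, so it cannot serve as a safety net. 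As it stands, the proposal identifies the right reduction but leaves the theorem's core step unproven.
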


\begin{proof}[Proof of Theorem \ref{thm:hitconvex}]
As a result of the vaccination process, the number of susceptible nodes at the beginning of the virus attack is $N - x$.
Clearly, $H(x)$ is decreasing in $x$ (since the number of susceptible nodes is decreasing as $x$ increases).
In order to prove the theorem, we need to show that:
\begin{equation}\label{maineq}
	\forall x>0, \; H(x) - H(x+1) \ge H(x+1) - H(x+2).
\end{equation}
For convenience, we denote by ${h}(x)$ the herd immunity threshold measured as a \textit{fraction} of the number of susceptible nodes. I.e., $h(x) = \nicefrac{H(x)}{(N - x)}$. 
Working with fractions will be useful in the comparison of evolutions with different number of anti-virus deployments (e.g., $x$, $x+1$ and $x+2$). 
Thus, in order to show that Eq. (\ref{maineq}) holds, we will prove that:
	\begin{equation*}
	(N - x) \cdot {h}(x) - 	(N - (x+1))  \cdot {h}(x+1) \ge
\end{equation*}
\begin{equation}\label{eqmarginalvac}
	\ge (N - (x+1)) \cdot {h}(x+1)  - (N - (x+2)) \cdot {h}(x+2).
\end{equation}

For convenience of tracking the overall process, we will add $x$ "virtual" vaccination steps to the list of steps, in order to couple the \textit{vaccination evolution} to the \textit{natural evolution}. Specifically, the indexing of the vaccination process is as follows: 
Steps $1, \dots, x$ correspond to the process of deploying the $x$ vaccines (the nodes moved directly from state \textbf{S} to state \textbf{V}, without being infective.);
Steps $x+1,\dots$ correspond to the infection process afterwards (one step per infected node).

We denote by $\hat{R}(x; n)$ the value of the effective reproduction number at step $n$ of the \textit{vaccination evolution} (i.e., given that $x$ vaccines were deployed in the network). 	

Since the vaccines are given in the region in a random fashion, the \textit{spreading distributions} $\mathcal{S}$ and $\mathcal{I}$ of the susceptible population at steps $1,\dots,x$, remains unchanged while only the \textit{size} of the susceptible population changes. Thus, the reproduction number will reduce linearly for any $0 \le n \le x$:
\begin{equation}\label{eq:rhatandr}
	\hat{R}(x; n) = \frac{N - n}{N} \cdot R_0.
\end{equation}
Figure \ref{Sec5Fig1} depicts the effective reproduction number under a natural evolution of the attack (i.e., no vaccinations), $R()$, vis-a-vis its evolution, $\hat{R}()$, under the vaccination evolution.

Note that $\hat{R}()$ (green) behaves to the right of $x$ exactly as $R()$ (blue) behaves to the right of $0$. Thus we can derive the value of $\hat{R}()$, for $n \ge x$,  by mapping it to the corresponding values of $R()$. We will use this property to analyse $\hat{R}$ in the sequel. This can be done by observing that the number of the initial susceptible nodes addressed by $R()$ is $N$ while that addressed by $\hat{R}()$ is $N-x$, and that the initial reproduction value of $R()$ is $R_0$ while that of $\hat{R}()$ (at $x$) is $(\frac{N - x}{N}) \cdot R_0$, following Eq. (\ref{eq:rhatandr}). Thus it holds that for any $0 \le p \le 1$: 
	
	\begin{equation}
		\hat{R}(x; x + (N - x) \cdot p) =  \left(\frac{N - x}{N}\right) \cdot R(N \cdot p).
	\end{equation}

\begin{remark}
	By its definition, the {exact} value of the Herd Immunity Threshold is {non-discrete} though the infection process is discrete. Hence we address $R()$ and $\hat{R}()$ as continuous functions. Under the assumptions that $N$ is large, such treatment may inflict infinitesimal inaccuracy, but makes the analysis and its description fluent. 
\end{remark}

	\begin{figure}[h]
		\centering
		\includegraphics[width=0.9\linewidth]{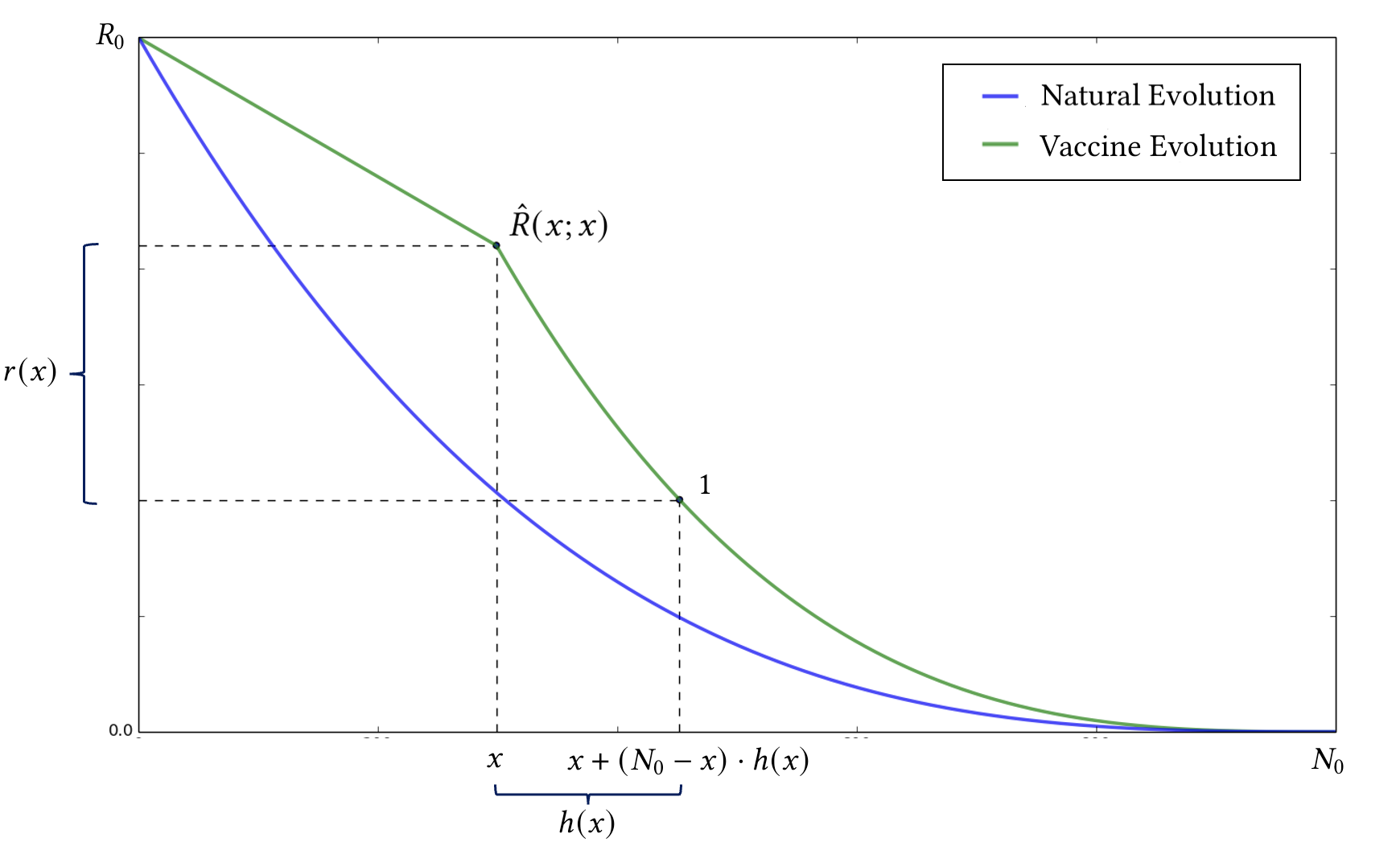}
		\caption{The value of $R(n)$ under a natural evolution vis-a-vis $\hat{R}(x;n)$ under the vaccination evolution, as a function of $n$.}
		\label{Sec5Fig1}
	\end{figure}

\noindent Denote by ${r}(x)$ the relative {drop} observed on $\hat{R}$ when it drops to $1$ (i.e. to reach the herd immunity). See Figure \ref{Sec5Fig1}. That is, 
	\begin{equation}\label{eq:clm1rdef}
		r(x) := \frac{\hat{R}(x;x) - 1}{\hat{R}(x;x)}. 
	\end{equation}
	We next establish two {supporting claims}: 
	\begin{claim}\label{clmhetdratior}
		For any $x$, 
		\begin{equation}\label{eq:clmhetd}
			\frac{h(x+1)-h(x+2)}{h(x)-h(x+1)}
			\le
			\frac{r(x+1)-r(x+2)}{r(x)-r(x+1)}.
		\end{equation}
	\end{claim}
	\begin{claim}\label{lem:randnratio}
	For any $x$,
	\begin{equation}\label{eq:lemrandnrat}
		\frac{r(x+1)-r(x+2)}{r(x)-r(x+1)}=\frac{N-x}{N-x-2}.
	\end{equation}
\end{claim}	
	\begin{proof}[Proof of Claim \ref{clmhetdratior}]
		Following Theorem 1 (the convexity of $R()$), and the definition of a convex function \cite{wiki:Convex_function}\footnote{$f$ is a convex function if and only if $\frac{f(x_2) - f(x_1)}{x_2 - x_1}$ is monotonically non-decreasing in $x_1$ for every fixed $x_2$ (or vice versa).},
		\begin{equation}\label{eq:clm1eq}
			\frac{R(N{h}(x)) - R(N{h}(x+1))}{N{h}(x) - N{h}(x+1)} 
		\le	 \frac{R(N{h}(x+2)) - R(N{h}(x+2))}{N{h}(x+2) - N{h}(x+1))} .
		\end{equation}
		By the mapping between $\hat{R}()$ and $R()$, we know that once a fraction $h(x)$ of the population (of size $N$) is infected the value of ${R}()$ shrinks by a factor of $(1-r(x))$ (see Figure \ref{Fig:randd}). I.e., $R(N \cdot h(x) ) = R_0 \cdot (1-r(x))$. 
Plugging it in the left hand side of Eq. (\ref{eq:clm1eq}) we get: 
		\begin{equation}\label{eq:clm1eqq}
			\frac{R(N\cdot h(x)) - R(N\cdot h(x+1))}{N\cdot h(x) - N\cdot h(x+1)} = 
		\end{equation}
		\begin{equation*}
			= \frac{R_0  (1 - r(x)) - R_0  (1 - r(x+1)) }{N\cdot h(x) - N\cdot h(x+1)} = 
			\frac{R_0}{N} \cdot \frac{ r(x+1) - r(x) }{h(x) - h(x+1)}.
		\end{equation*}
		Deriving Eq. (\ref{eq:clm1eqq}) for $x+1$ as well, combined with Eq. (\ref{eq:clm1eq}):
		\begin{equation}
			\frac{R_0}{N} \cdot \frac{ r(x+1) - r(x) }{h(x) - h(x+1)} \le
			\frac{R_0}{N} \cdot \frac{ r(x+2) - r(x+1) }{h(x+1) - h(x+2)}.
		\end{equation}
		Therefore,
Eq. (\ref{eq:clmhetd}) holds and Claim \ref{clmhetdratior} follows.
	\end{proof}
	Note that for homogeneous populations, the inequality in Eq. (\ref{eq:clmhetd}) is in fact an equality.
	
	\begin{figure}[h]
		\centering
		\includegraphics[width=0.97\linewidth]{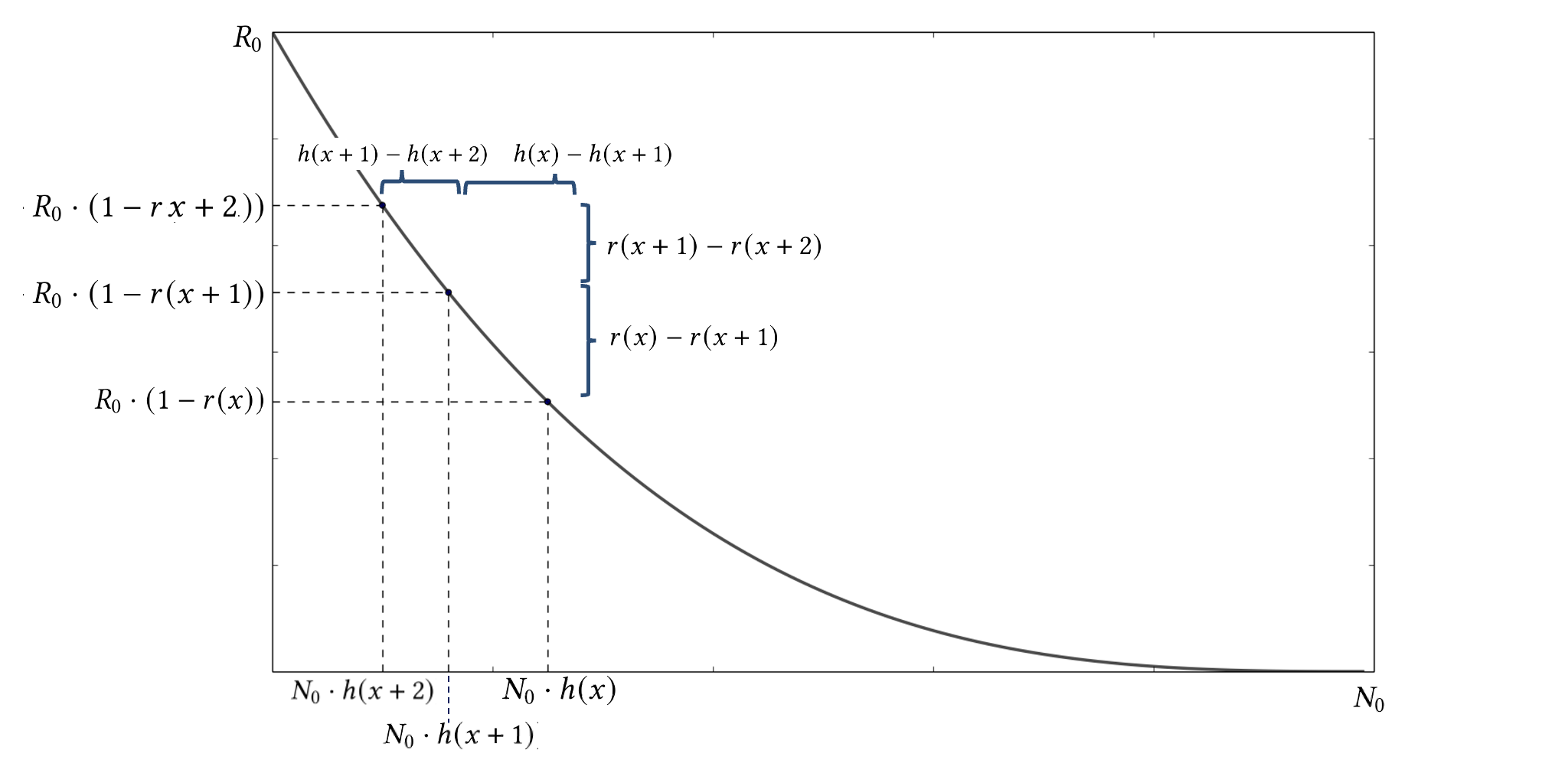}
		\caption{Demonstrating the matching of $\hat{R}$ to $R$.}
		\label{Fig:randd}
	\end{figure}
	\begin{proof}[Proof of Claim \ref{lem:randnratio}]
	It holds that:
	$$
	r(x)-r(x+1)
	=1-\frac{1}{\hat{R}(x;x)}-\left(1-\frac{1}{\hat{R}(x+1;x+1)}\right)=
	$$
	\[
	\frac{1}{\hat{R}(x+1;x+1)}-\frac{1}{\hat{R}(x;x)}=
	\frac{1}{R_{0}\cdot \frac{N-(x+1)}{N}}-\frac{1}{R_{0} \cdot \frac{N-x}{N}}=
	\]
	\[
	\frac{N(N-x)-N(N-x-1)}{R_{0}\cdot(N-x-1)(N-x)}=
	\frac{N}{R_{0}\cdot(N-x-1) \cdot (N-x)}.
	\]
	Similarly,
	$$
	r(x+1)-r(x+2)=\frac{N}{R_{0}\cdot(N-x-2) \cdot (N-x-1)}.
	$$
	Therefore:
    $$		\frac{r(x+1)-r(x+2)}{r(x)-r(x+1)}=
		\frac{\frac{N}{R_{0}\cdot(N-x-2) \cdot (N-x-1)}}{\frac{N}{R_{0}\cdot(N-x-1) \cdot (N-x)}}=
		\frac{N-x}{N-x-2},$$
		and we conclude the proof. 
\end{proof}
	
	\noindent
	Using Claim \ref{clmhetdratior} and Claim \ref{lem:randnratio},
	\begin{equation}\label{eq:nratiogr}
		\frac{h(x+1)-h(x+2)}{h(x)-h(x+1)}\le\frac{N -x}{N - x-2}.
	\end{equation}
	Reorganizing Eq. (\ref{eqmarginalvac}) and then using algebraic manipulation combined with Eq. (\ref{eq:nratiogr}) yields Eq. (3) and concludes the proof of Theorem \ref{thm:hitconvex}.
\end{proof}
\begin{remark}
Note that for homogeneous networks, the inequalities in Eq. (\ref{eqmarginalvac}) is in fact an equality (i.e., linearity holds).
\end{remark}

\subsection{{The Multi-Region Allocation Problem}}\label{secvaccineconvex}
{The system consists of $k$ regions of sizes $N_1, \dots, N_k$. Given their spreading distributions $\mathcal{S}_1,\mathcal{I}_1,\dots,\mathcal{S}_k,\mathcal{I}_k$ and  $v$ vaccines, the operator of the system is required to derive an allocation ${V}=(v_1,\dots,v_k)$ s.t. $\Vert {V} \Vert = v$ while trying to minimize total number of nodes getting infected prior to reaching herd immunity,
$ \sum_{i=1}^{k} H_i(v_i)$.}

{One can derive the optimal allocation using a \textit{na\"ive approach}, by going over all the possible allocations for the system. However, the number of possible allocations is the number of solutions to:
$v_1 + \dots + v_k = v$ such that for any $i$,
$0 \le v_i \le N_i$. 
This exhaustive search derives the optimal allocation at time complexity which is exponential in the number of regions - at least $O(v^k)$ (just to scan all solutions, without accounting to the HIT calculations).
}

{According to Theorem \ref{thm:hitconvex}, our objective function ($ \sum_{i=1}^{k} H_i(v_i)$) is a sum of convex functions.
We propose Algorithm \ref{alg:ALGDPgeneral}, a \textit{greedy} algorithm which derives the optimal allocation. Algorithm \ref{alg:ALGDPgeneral} is similar to known greedy convex optimizing algorithms from the literature (see e.g., \cite{ibaraki1988resource}).}

{
The algorithm iteratively calculates the (absolute value of the) marginal revenue resulting from adding a single vaccine, denoted by $\Delta_i(j)$ (i.e., allocating $j$ vaccines, instead of $j-1$, in region $i$)\footnote{In order to perform the $\Delta_i()$'s calculations, we will use the approximation presented in \cite{oz2021heterogeneity} which allows given the network size, $\mathcal{S}$ and $\mathcal{I}$, to derive $R$ as a function of $n$, and to obtain the expected HIT.}.
	Algorithm 1 maintains for each region $i$ the number of vaccines allocated so far, $v_i$, 
	derives the $\Delta()$ values of the candidates for selection, $\Delta_i(v_i+1)$, 
	adds a vaccine in the region that maximizes the revenue $\Delta_i()$, 
and appends to the candidates set the next vaccine in the selected region.
	The algorithm terminates when the number of allocated vaccines, $\sum v_i$, equals the available number of vaccines, $v$.}

{The optimality can be proven directly from the convexity property (Theorem \ref{thm:hitconvex}), which yields that the marginal revenue from each additional vaccine allocated to a region is decreasing. Thus,
for any $i,j$: $\Delta_i(j) \ge \Delta_i(j+1)$. Hence, each element selected by the algorithm can be part of any optimal allocation and thus its selection is non-regrettable.}

	{The algorithm performs at most $k + v$ $\Delta()$'s calculations ($k$ in the initialization, and after each selection of the maximal element, for $v$ selections).
	The computation of the $\Delta()$s involves the derivation of a weighted expected value on the susceptibility distribution, whose retrieval time is linear at the sizes of the populations $O(\max \{N_1,\dots,N_k\})$ (as their support size can be limited by the population size).

	Therefore, we have a total running time of $O((\max \{N_1,\dots,N_k\}) \cdot (k + v))$.}

		\begin{algorithm}\label{alg2}
		\caption{Optimal Multi-Region Licences Allocation}
		\label{alg:ALGDPgeneral}
		\begin{algorithmic}
			\State \textbf{INPUT:} Population sizes $(N_1, \dots, N_k)$; Spreading Distributions $((\mathcal{I}_1, \mathcal{S}_1), \dots, (\mathcal{I}_k, \mathcal{S}_k))$; $v$ vaccines;
			\State 1. $ (v_1, \dots, v_k) \leftarrow (0, \dots, 0)$.
			\State 2. Calculate $\Delta_1(1), \dots, \Delta_k(1)$.
			\ForAll {$j=1$ to $v$}
			\State a. Choose $i$ s.t. $\Delta_i(v_i + 1) = \max_j \Delta_j(v_j + 1)$.
			\State b. $v_i = v_i + 1$.
			\State c. Calculate $\Delta_i(v_i + 1)$,
			\EndFor
			\textbf{RETURN:} $\textbf{V} = (v_1, \dots, v_k)$.
		\end{algorithmic}
	\end{algorithm}

\section{{The effects of vaccinations during the spreading process}}\label{sec:sec4}

{In this section we address the "online" problem, where the vaccination process, or part of it, occurs during the spread of the virus.}
{In Subsection \ref{sec:sub41} we analyze the interleaving progress of the spreading and the vaccination processes. Although they have different nature (the virus spreading is biased towards "catching" heavy spreaders while the vaccination process is conducted uniformly), we manage to establish a monotonicity property of vaccinations: given a batch of anti-virus installations, the number of infections prior to reaching herd immunity is increasing in their deployment timing.
Using the monotonicity property, in Subsection \ref{sec:sub42} we apply the convexity properties from Section \ref{sec:sec3} and derive optimal online deployment policy.}

\subsection{Monotonicity of The HIT in The Vaccination Timing}\label{sec:sub41}

{We consider a network consisting of $N$ nodes, to which a batch of $x$ vaccines are allocated ($x \ge 1$). We compare two scenarios: deploy the $x$ vaccines at step $i_1$ (i.e., after $i_1$ infections have occurred), or deploy them at step $i_2$ (after $i_2$ infections have occurred), where $i_1 < i_2$. We will prove that:

\begin{theorem}\label{thm:vacctime}
Given $x$ vaccines, the number of infected individuals prior to reaching herd immunity is monotonically non-decreasing in the step at which the vaccines are deployed.  
\end{theorem}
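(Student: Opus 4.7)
The plan is to express the HIT under a batch of $x$ vaccines deployed at step $i_1$ as an explicit function $H(i_1)$, and to show that $H$ is non-decreasing in $i_1$ via a (continuous) derivative argument that leverages the convexity of $R$.

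First, I extend the offline scaling identity used in the proof of Theorem~\ref{thm:hitconvex} to the online setting. In the offline case, uniform vaccination preserves the type distribution of the susceptible pool, yielding $\hat{R}(x;x+(N-x)p)=\frac{N-x}{N}R(Np)$. The same reasoning applies when the $x$ vaccinations are inserted after $i_1$ natural infections: the pool $V^{(i_1)}$ loses $x$ nodes uniformly at random, its type distribution is preserved in expectation, and the subsequent biased natural evolution is a scaled copy of the continuation of the unvaccinated process from $V^{(i_1)}$. This gives the online analogue
\[
\hat{R}(i_1,x;\,i_1+k)\;=\;\frac{N-i_1-x}{N-i_1}\,R\!\left(i_1+k\cdot\frac{N-i_1}{N-i_1-x}\right).
\]
Solving $\hat{R}=1$ for the HIT yields
\[
H(i_1)\;=\;i_1+(m^{\ast}-i_1)\,\frac{N-i_1-x}{N-i_1},\qquad R(m^{\ast})=\frac{N-i_1}{N-i_1-x}.
\]

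Next, I differentiate $H$ with respect to $i_1$ (following the continuous treatment in the remark of Section~\ref{sec:sec3}) and use the implicit relation above to extract $dm^{\ast}/di_1$. The condition $dH/di_1\ge 0$ reduces, after standard algebra, to the single inequality
\[
\lvert R'(m^{\ast})\rvert\cdot(N-m^{\ast})\;\ge\;R(m^{\ast}).
\]
This is exactly what the convexity of $R$ provides: since $R$ is convex with $R(N)=0$, the tangent at $m^{\ast}$ lies below $R$, so $0=R(N)\ge R(m^{\ast})+R'(m^{\ast})(N-m^{\ast})$, which rearranges to the required inequality (using $R'<0$). Moreover, when $R$ is linear (the homogeneous case) this is an equality, matching the intuition that vaccination timing should not matter under homogeneous spread.

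The main obstacle is the first step---rigorously extending the offline scaling identity to the online setting, which requires verifying that uniform sampling from the non-trivially distributed pool $V^{(i_1)}$ preserves its type distribution in expectation, and that the subsequent biased natural evolution behaves as a scaled copy of the unvaccinated continuation. Once this scaling is secured, the convexity-based monotonicity becomes essentially a short calculus check, with the convexity of $R$ supplying exactly the inequality that closes the argument. A discrete version of the same argument (comparing $H(i_1+1)-H(i_1)\ge 0$ directly via the formula above, with $i_2=i_1+1$ and then iterating) handles the original discrete statement without relying on the continuous approximation.
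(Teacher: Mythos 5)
Your proposal is correct, and it rests on the same foundation as the paper but finishes by a genuinely different route. Your online scaling identity is precisely the paper's Lemma \ref{lemrvactimet}, reparameterized by infection count rather than by total step index (setting $n=i_1+x+k$ in Eq.~(\ref{eq:rvactimedef}) recovers your formula exactly), and the paper justifies it by the same two observations you flag as the ``main obstacle'': uniform vaccination leaves the spreading distribution of the susceptible pool unchanged while only rescaling its size, and $R$ is determined by the size and distribution of that pool. Where you diverge is the endgame. The paper proves the stronger pointwise statement (Claim \ref{clm:vacctime}) that $\hat{R}(i_1,x;n)\le\hat{R}(i_2,x;n)$ for \emph{all} $n$, via the algebraic invariant of Lemma \ref{helplem2} (the product $d_{i,x}\bigl(1-n_{i,x}/N\bigr)$ is independent of $i$) combined with convexity in secant form ($R(m)/(N-m)$ non-increasing in $m$), and then reads off the HIT ordering from where each curve crosses $1$. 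You instead solve $\hat{R}=1$ in closed form, getting $H(i_1)=i_1+(m^{\ast}-i_1)\frac{N-i_1-x}{N-i_1}$ with $R(m^{\ast})=\frac{N-i_1}{N-i_1-x}$, and differentiate; I checked the algebra, and $dH/di_1\ge 0$ does reduce exactly to $\lvert R'(m^{\ast})\rvert (N-m^{\ast})\ge R(m^{\ast})$, which is nothing but the tangent (infinitesimal) form of the secant inequality the paper invokes in Eq.~(\ref{eq:rratioinN}) --- note that both forms tacitly require $R(N)=0$, which you state explicitly while the paper leaves it implicit. The trade-offs: the paper's argument is discrete and needs no differentiability, and it establishes dominance of the entire $\hat{R}$ trajectory rather than only the HIT; your argument buys an explicit expression for $H(i_1)$ that quantifies the cost of each unit of delay and makes the homogeneous (linear $R$) equality case transparent, at the price of assuming $R$ differentiable at $m^{\ast}$ --- convexity guarantees only one-sided derivatives, but the subgradient inequality suffices there, and your proposed discrete fallback (comparing $H(i_1+1)-H(i_1)$ via secants and iterating) closes that gap and essentially reconstructs the paper's comparison for $i_2=i_1+1$.
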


I.e., deploying the given allocation as soon as it is available is the optimal policy.

The proof follows similar ideas to those of the proof of Theorem 2 from Section III. The main difference is that now, the compared evolutions have same allocation size but different timing of deployment (while in Section III same timing, prior to the spread, was assumed with different allocation sizes).

The indexing of the vaccination process is as follows: 
Steps $1, \dots, i$ correspond to the steps of the  infection process occurring before the anti-virus are deployed (i.e., one step per infection of a node);
Steps $i+1, \dots, i+x$ correspond to the process of deploying the $x$ anti-viruses; 
Steps $i+x+1,\dots$ correspond to the infection process afterwards.

We denote by $\hat{R}(i,x;n)$ the value of the effective reproduction number at step $n$, assuming that {$x$ vaccines are deployed at steps $i+1, \dots, i+x$}. In order to prove Theorem 3, we will establish the following Claim:

\begin{claim}\label{clm:vacctime}
	Let $i_1 < i_2$, then for any $n \ge x + \max\{i_1, i_2\}$,
	\begin{equation}
		\hat R(i_1,x;n) \le \hat R(i_2,x;n).
	\end{equation}

\end{claim}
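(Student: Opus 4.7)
The plan is to reduce the claim to a single-step gap $i_2 = i_1 + 1$ and then to a single event-swap. By transitivity, $\hat R(i_1, x; n) \le \hat R(i_1+1, x; n) \le \cdots \le \hat R(i_2, x; n)$ would imply the full claim, so it suffices to treat the base case $i_2 = i_1 + 1$. For this case, I would couple the two scenarios so that they share the realization of the first $i_1$ infections; both then arrive at step $i_1 + 1$ with the same random susceptible set $V_{i_1}$ carrying the same joint $(S, I)$ distribution. From step $i_1 + 1$ onward, scenario $i_1$ performs $x$ vaccinations followed by $n - i_1 - x$ infections, while scenario $i_1 + 1$ performs $1$ infection, then $x$ vaccinations, then $n - i_1 - 1 - x$ infections. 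Both sequences use the same multiset of operations; scenario $i_1 + 1$ simply places one of the infections $x$ positions earlier. This gap decomposes as a composition of $x$ adjacent transpositions, each swapping a consecutive (infection, vaccination) pair into a (vaccination, infection) pair.

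The proof therefore reduces to a key single-swap lemma: for any current susceptible population $W$, if the next two scheduled operations are taken in the order (infection, vaccination) or (vaccination, infection), then the resulting random populations $Y^{\mathrm{iv}}$ and $Y^{\mathrm{vi}}$ satisfy a stochastic dominance expressing that $Y^{\mathrm{vi}}$ has fewer heavy spreaders than $Y^{\mathrm{iv}}$, and moreover this dominance persists through any number of subsequent biased infections so that the effective reproduction number measured at every later step $n'$ is smaller on $Y^{\mathrm{vi}}$ than on $Y^{\mathrm{iv}}$. Once the lemma is in hand, iterating across the $x$ transpositions yields $\hat R(i_1, x; n) \le \hat R(i_1 + 1, x; n)$.

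The main technical obstacle is establishing the stochastic dominance. I would compute, for each $u \in W$, its per-node survival probability under the two schedules. Under (infection, vaccination) the first step removes $u$ with probability proportional to $S(u)$, and the second step removes a uniformly random node from the reduced pool of size $|W| - 1$; under (vaccination, infection) these operations are reversed. A direct calculation should show that the sign of $\Pr[u \in Y^{\mathrm{iv}}] - \Pr[u \in Y^{\mathrm{vi}}]$ agrees with the sign of $S(u) - \bar S(W)$, where $\bar S(W)$ is the average susceptibility in $W$. Intuitively, the biased infection is most effective at thinning heavy spreaders when applied to an un-thinned pool, so placing it \emph{first} wastes part of its heavy-spreader targeting on nodes that the later uniform vaccination was unlikely to have hit anyway, leaving more heavy spreaders behind than does the order (vaccination, infection). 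Combining this per-node sign structure with the monotonicity of $\varphi(s) = \mathbb{E}[I(u) \mid S(u) = s]$ (so that the $S$ and $I$ marginals shift together) upgrades it to the desired stochastic dominance on the joint $(S, I)$ distribution.

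Finally, to propagate the dominance through the remaining infections and conclude about the reproduction number at step $n'$, I would appeal to the same monotonicity machinery used in the proof of Theorem 1 in \cite{tavorilevyrconvex}: the susceptibility-biased removal map preserves the relevant stochastic orderings on the susceptible population, and the effective reproduction number is a monotone functional of that population's joint spreading weights. Consequently, the initial dominance $Y^{\mathrm{vi}} \preceq Y^{\mathrm{iv}}$ is preserved by every subsequent biased infection, yielding $\hat R(i_1, x; n) \le \hat R(i_1 + 1, x; n)$ for all $n \ge x + i_1 + 1$, and Claim \ref{clm:vacctime} follows by transitivity.
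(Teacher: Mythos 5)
Your route is genuinely different from the paper's. The paper never opens up the stochastic process at the level of individual event swaps: it observes that \emph{uniform} vaccination leaves the spreading distribution of the susceptible pool invariant and only rescales its size, derives from this the closed form $\hat R(i,x;n)=R\bigl(n-\frac{x(N-n)}{N-i-x}\bigr)\cdot\bigl(1-\frac{x}{N-i}\bigr)$ (Lemma \ref{lemrvactimet}), proves the algebraic identity of Lemma \ref{helplem2}, and then invokes the convexity of $R$ (Theorem 1) exactly once, as a black box, to compare the two rescaled curves. Your interchange argument bypasses convexity of $R$ entirely and instead re-enters the dominance machinery that underlies Theorem 1's own proof. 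Your reduction to adjacent (infection, vaccination) transpositions is sound, and the single-swap computation is right in direction: with $T=\sum_{v\in W}S(v)$ and $m=|W|$ one gets $\Pr[u\in Y^{\mathrm{iv}}]=\bigl(1-\frac{S(u)}{T}\bigr)\frac{m-2}{m-1}$ and $\Pr[u\in Y^{\mathrm{vi}}]=\frac{m-1}{m}-\frac{S(u)}{m}\sum_{w\neq u}\frac{1}{T-S(w)}$, and the difference is indeed single-crossing in $S(u)$, favoring survival of heavy spreaders under the (infection, vaccination) order. One cosmetic inaccuracy: the crossing point is only approximately $\bar S(W)$, since the sum excludes $w=u$ and the denominators are nonlinear; but because both orders leave exactly $m-2$ nodes, single-crossing alone already yields first-order dominance of the expected composition, so this is fixable.

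The genuine gap is the propagation step. You assert that the initial dominance ``persists through any number of subsequent biased infections'' and that $\hat R$ at every later step is a monotone functional of the population, citing the machinery of \cite{tavorilevyrconvex}. But that machinery establishes dominance between \emph{successive states} $\textbf{V}_n,\textbf{V}_{n+1}$ of a single natural evolution; what your argument needs is a closure lemma \emph{across two different processes}: if population $A$ dominates population $B$ (of equal size), then after one susceptibility-biased removal applied to each, $A'$ still dominates $B'$, and $\hat R$ computed from $A$ exceeds that computed from $B$. Size-biased deletion does not preserve first-order stochastic dominance in general --- the removal hazards depend on the whole population through the normalizing total, so the two processes thin at different rates --- and making this work typically requires a stronger ordering (e.g., likelihood-ratio) together with a proof that the biased-removal map is closed under it, plus the upgrade from the $S$-marginal to the joint $(S,I)$ law via the monotonicity of $\varphi$. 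None of this is contained in the cited theorem as stated, so as written your proof delegates its hardest step to a lemma that does not yet exist. The program is plausibly completable in the mean-field pdf framework of \cite{oz2021heterogeneity}, but it would be substantially heavier than the paper's endgame, which needs only the scale-invariance of uniform vaccination and one application of the convexity of $R$.
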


\begin{proof}[Proof of Claim \ref{thm:vacctime}]

Note that at step $i+x$ of the \textit{vaccination evolution} the susceptible population (of size $N - i - x$) has the same spreading distribution as the susceptible population (of size $N - i$) at step $i$ of the \textit{natural evolution}, due to the random fashion of the vaccination.

Figure \ref{fig:vacc_notation} depicts the effective reproduction number under a natural evolution of the spreading process (i.e., no vaccines), $R$, vis-a-vis its evolution under the vaccination evolution, $\hat{R}$. 
Hence, the shape of $R()$ (in blue) to the right of $i$ is identical to the shape of $\hat{R}()$ (green) to the right of $i+x$, but with a different scale.

\begin{figure}[h]
	\centering
	\includegraphics[width=0.85\linewidth]{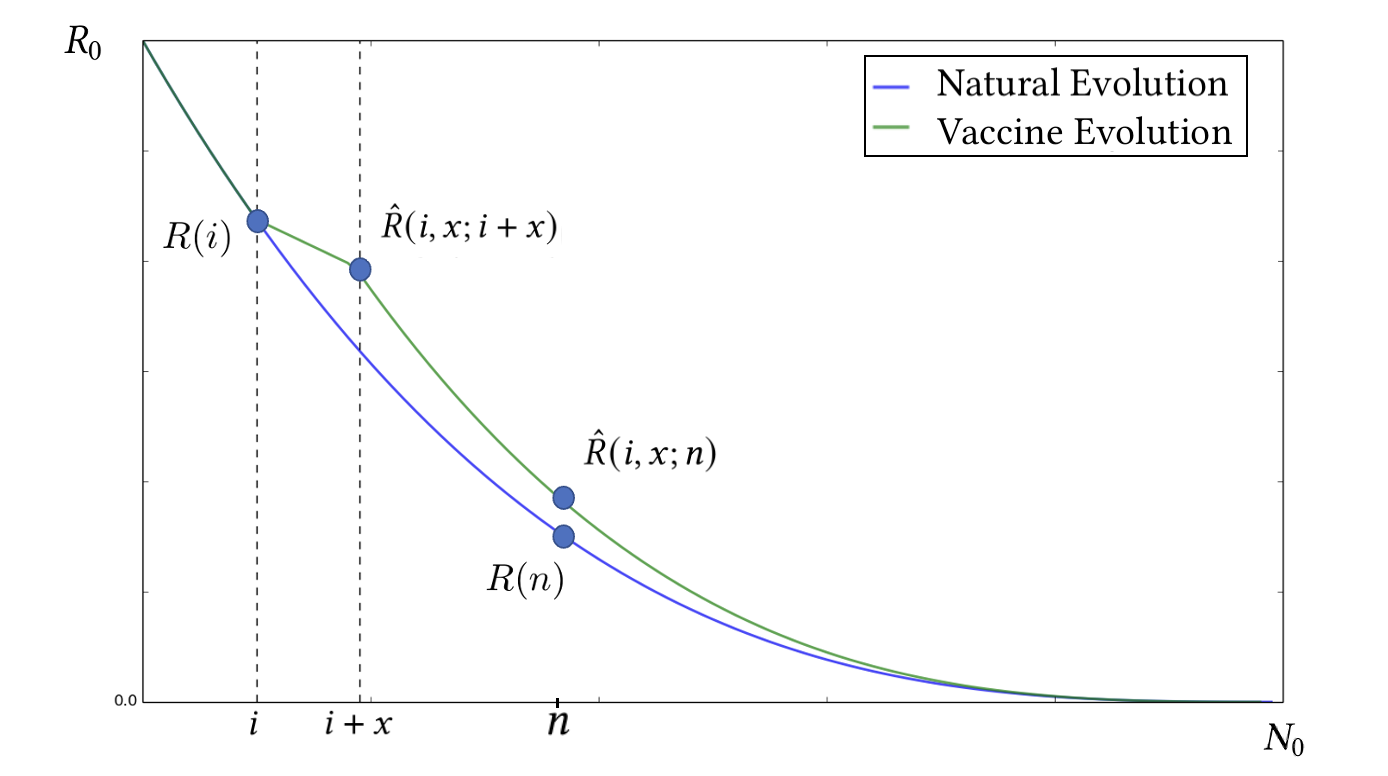}
	\caption{The value of $R$ under a natural evolution vis-a-vis $\hat{R}$ under the vaccination evolution, as a function of $n$.}
	\label{fig:vacc_notation}
\end{figure}

We establish the following relation between $R$ and $\hat{R}$.

\begin{lemma}\label{lemrvactimet}
	Consider $x,i$ and $n$ be such that $n > i + x$. Then:
	\begin{equation}\label{eq:rvactimedef}
		\hat{R}(i,x;n) = R\left( n - \frac{x \cdot (N - n)}{N - i - x} \right)\cdot\left(1-\frac{x}{N-i}\right).
	\end{equation}
\end{lemma}

\begin{proof}[Proof of Lemma \ref{lemrvactimet}]
	Recall that the spreading distributions of the population at step $i+x$ of the {vaccine evolution}, and at step $i$ in the natural evolution are the same, since the vaccines are given in a random fashion. Yet, the sizes of the susceptible populations in these evolutions are $N - (i + x)$ and $N - i$, respectively.
	Since $R(n)$ is determined by the size and the distribution of the susceptible population \cite{oz2021heterogeneity}, it holds that
	\begin{equation}\label{eqrhatdef}
		\hat{R}(i,x;i+x)=\frac{N-(i+x)}{N-i} \hat{R}(i,x;i) =\frac{N-(i+x)}{N-i} R(i).
	\end{equation}
	{The last equality holds since the systems separate only after step $i$ (at which the vaccines are deployed), and hence $\hat{R}(i,x;i) = R(i)$.}
	Following similar reasoning, since the systems are progressing by the same spreading functions with different scales (of $N - i$ and $N - (i + x)$),
	for any $0<p<1$,
	\begin{equation}\label{eq:pinr}
		\frac{\hat{R}(i,x \; ; \; (i+x)+p\cdot(N-(i+x)))}{\hat{R}(i,x;i+x)}=\frac{R(i+p\cdot(N-i))}{R(i)}.
	\end{equation}	

	For any $i + x < n < N$ it holds that:	$
	n = (i+x)+\frac{n-(i+x)}{N-(i+x)}\cdot(N-(i+x))
	$). Thus,
	\begin{equation}\label{myequp}
	\hat{R}(i,x;n)
	=\hat{R}\left(i,x  ;  (i+x)+\frac{n-(i+x)}{N-(i+x)}(N-(i+x))\right).
	\end{equation}
	Note that $0 < \frac{n-(i+x)}{N-(i+x)} < 1$. Hence, by Eq. (\ref{eq:pinr}) and (\ref{myequp}):
	$$
	\hat{R}(i,x;n) = \frac{R\left(i+\frac{n-(i+x)}{N-(i+x)}\cdot(N-i)\right)}{R(i)}\cdot\hat{R}(i,x;(i+x)).
	$$
	Substituting Eq. (\ref{eqrhatdef}) into this expression we get: 
	\begin{equation}\label{eq:forappend}
		\hat{R}(i,x;n) = \frac{R\left(i+\frac{n-(i+x)}{N-(i+x)}\cdot(N-i)\right)}{R(i)}\cdot\frac{N-(i+x)}{N-i}\cdot R(i)=
	\end{equation}

	$$
	=R\left(i+\frac{n-(i+x)}{N-(i+x)}\cdot(N-i)\right)\cdot\frac{N-(i+x)}{N-i}
	=
	$$
	\\

	$$
	=
	R\left(i+\frac{Nn-Ni-xN-in+i^{2}+xi}{(N-i-x)}\right)\cdot \left(1-\frac{x}{N-i}\right)=
	$$

	$$
	=
	R\left(i+\frac{Nn-xN-in}{(N-i-x)}-i\right)\cdot\left(1-\frac{x}{N-i}\right) =
	$$

	$$
	=R\left(\frac{Nn-in-xN+(xn-xn)}{(N-i-x)}\right)\cdot\left(1-\frac{x}{N-i}\right)=
	$$

	$$
	=R\left(n-\frac{x\cdot (N-n)}{N-i-x}\right)\cdot\left(1-\frac{x}{N-i}\right). 
	$$
\end{proof}

\begin{figure}
	\centering
	\includegraphics[width=0.85\linewidth]{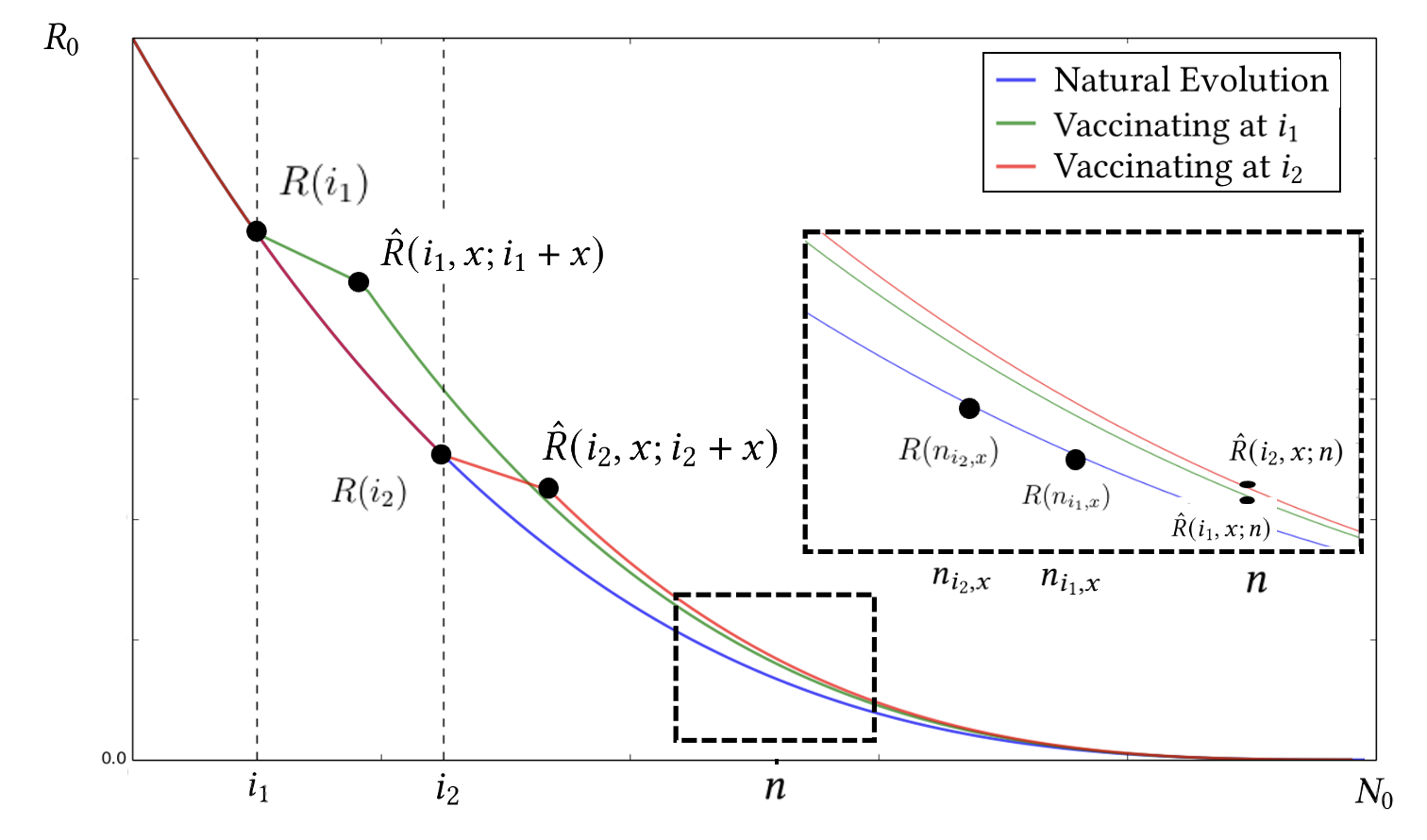}
	\caption{The effective reproduction number as a function of $n$, comparing deploying $x$ vaccines at step $i_1$ and at step $i_2$.}
	\label{figR_vac_illus}
\end{figure}
\noindent
For convenience, we denote  
$$	d_{i,x} := 1 - \frac{x}{N-i}
\; 
\;
\text{and}
\;
\;
n_{i,x} := n - \frac{x \cdot (N - n)}{N - i - x}.$$
By Lemma \ref{lemrvactimet}, for any $n > x+i$:
$ \hat{R}(i,x;n) = R(n_{i,x}) \cdot d_{i,x}.$

\begin{lemma}\label{helplem2}
	Let $x$ and $i_1, i_2$. Then:
	\begin{equation}
		\frac{N - n_{i_1,x}}{N - n_{i_2,x}} = \frac{d_{i_2,x}}{d_{i_1,x}}.
	\end{equation}
\end{lemma}

\begin{proof}[Proof of Lemma \ref{helplem2}]
	It holds that,
	\begin{equation}\label{eq:appeneq1}
		\frac{N-n_{i_1,x}}{N-n_{i_2,x}} 
		=
		\frac{\frac{N-n_{i_1,x}}{N}}{\frac{N-n_{i_2,x}}{N}}
		=
		\frac{\left(1-\frac{n_{i_1,x}}{N}\right)}{\left(1-\frac{n_{i_2,x}}{N}\right)}.
	\end{equation}
	Note that,
	$$
	\left({1-  \frac{n_{i,x}}{N}}\right) \cdot d_{i,x}  
	= \left(1-\frac{\left(n-\frac{xN-xn}{N-i-1}\right)}{N}\right) \cdot \frac{N-i-x}{N-i}=
	$$
	$$
	=\left(\frac{N-i-x}{N-i}-\frac{n(N-i-x)}{N(N-i)}+\frac{(xN-xn)}{N(N-i)}\right)=\left(\frac{N-n}{N}\right).
	$$
	I.e., the product $\left({1-  \frac{n_{i,x}}{N}}\right) \cdot d_{i,x}$ is constant.
That is,
	$$
	d_{i_1,x}\cdot\left(1-\frac{n_{i_1,x}}{N}\right)=d_{i_2,x}\cdot\left(1-\frac{n_{i_2,x}}{N}\right).
	$$
	Hence by Eq. (\ref{eq:appeneq1}), 
	$$ \frac{(N-n_{i_1,x}) \cdot d_{i_1,x}}{(N-n_{i_2,x}) \cdot d_{i_2,x}} = \frac{\left(1-\frac{n_{i_1,x}}{N}\right) \cdot d_{i_1,x}}{\left(1-\frac{n_{i_2,x}}{N}\right) \cdot d_{i_2,x}} = 1 $$
	and therefore 
	$$
	\frac{N-n_{i_1,x}}{N-n_{i_2,x}} = \frac{d_{i_2,x}}{d_{i_1,x}},
	$$ as required.
\end{proof}

Having established Lemma \ref{lemrvactimet} and Lemma \ref{helplem2}, we now conclude the proof of Claim \ref{thm:vacctime}:
	Let $x$, and let $0 < i_1 < i_2 < N$. 
	By Lemma \ref{lemrvactimet} we have that $\hat{R}(i_1,x;n)= R ( n_{i_1,x}) \cdot d_{i_1,x}$, and  $\hat{R}(i_2,x;n)= R ( n_{i_2,x}) \cdot d_{i_2,x}$.

	Note that $n_{i_2,x} < n_{i_1,x}$ (see Figure \ref{figR_vac_illus}), and thus by the convexity of $R$ (Theorem 1):
	\begin{equation}\label{eq:rratioinN}
		\frac{R(n_{i_1,x})}{R(n_{i_2,x})}
		\le \frac{N-n_{i_1,x}}{N-n_{i_2,x}}.
	\end{equation}

	By Eq. (\ref{eq:rratioinN}) and Lemma \ref{helplem2}.
	$	\frac{R(n_{i_1,x})}{R(n_{i_2,x})} \le \frac{d_{i_2,x}}{d_{i_1,x}}$,
Thus,
	\begin{equation*}
		\hat{R}(i_1,x;n) = R(n_{i_1,x})\cdot d_{i_1,x} \le R(n_{i_2,x})\cdot d_{i_2,x} = \hat{R}(i_2,x;n).
	\end{equation*}
\end{proof}

	Since the herd immunity is reached once $R$ drops below 1, Theorem \ref{thm:vacctime} follows immediately from Claim \ref{clm:vacctime}.

\subsection{Online Policy: Optimal Allocation and Timing}\label{sec:sub42}
According to Theorem \ref{thm:vacctime}, the number of infected nodes prior to reaching herd immunity in the network is increasing in the step at which the vaccines are deployed. The following corollary follows directly from the monotonicity property,
\begin{corollary}
{An optimal deployment policy which minimizes the HIT deploys each batch of vaccines at the earliest step at which they are available.}
\end{corollary}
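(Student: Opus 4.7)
The plan is to reduce the corollary to an inductive application of Theorem \ref{thm:vacctime} via an exchange argument over the batches. In the online scenario the operator is handed a schedule $(\tau_1, x_1), (\tau_2, x_2), \dots, (\tau_m, x_m)$ specifying for each batch $j$ its earliest available step $\tau_j$ and its size $x_j$, with $\tau_1 \le \tau_2 \le \cdots \le \tau_m$, and any feasible policy is a vector $\pi = (t_1, \dots, t_m)$ with $t_j \ge \max(\tau_j, t_{j-1})$. The ``deploy-as-early-as-possible'' policy $\pi^\ast$ sets $t_j^\ast = \max(\tau_j, t_{j-1}^\ast)$ for all $j$. Starting from an arbitrary feasible $\pi$, I would build a chain of feasible policies $\pi = \pi^{(0)}, \pi^{(1)}, \dots, \pi^{(m)} = \pi^\ast$, where $\pi^{(k)}$ agrees with $\pi^\ast$ on batches $1, \dots, k$ and with $\pi$ on the remaining batches.

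Consecutive policies $\pi^{(k-1)}$ and $\pi^{(k)}$ differ only in the timing of batch $k$, deployed at step $t_k$ in the former and at $t_k^\ast \le t_k$ in the latter. Up to the moment just before batch $k$ is deployed in $\pi^{(k)}$, the two trajectories are coupled to be identical: the same nodes have been vaccinated so far (by earlier batches that have common timing), and the same infection dynamics have acted on statistically identical susceptible pools. From that moment onward, the two evolutions form exactly the setting of Theorem \ref{thm:vacctime} applied to the residual system of size $N - \sum_{j<k} x_j$, with batch $k$ playing the role of the single batch whose deployment timing is varied. Invoking Claim \ref{clm:vacctime} on this residual system yields $\hat R_{\pi^{(k)}}(n) \le \hat R_{\pi^{(k-1)}}(n)$ for every $n$ beyond the deployment of batch $k$, so herd immunity is reached at an earlier (or equal) step under $\pi^{(k)}$. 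Chaining these inequalities across $k=1,\dots,m$ gives $\mathrm{HIT}(\pi^\ast) \le \mathrm{HIT}(\pi)$, and since $\pi$ was arbitrary, $\pi^\ast$ is optimal.

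The main obstacle is that Theorem \ref{thm:vacctime} is stated at the start of the process, whereas the exchange argument requires invoking it from an intermediate state in which some vaccinations and some infections have already occurred. One must check that Claim \ref{clm:vacctime} applies at this residual state, which in turn requires the convexity of $R$ for the residual susceptibility/infectiousness distribution. This is where I would appeal to Theorem \ref{thm:hitconvex}'s underlying ingredient (Theorem 1 of \cite{tavorilevyrconvex}): the convexity argument holds for any spreading distribution of the susceptible pool, and the stochastic dominance of heavy spreaders is preserved after uniform vaccination (since uniform vaccination does not bias the remaining pool) and after infection (by the very monotonicity that drives Theorem 1). With this observation the residual system satisfies all the hypotheses of Theorem \ref{thm:vacctime}, the chained exchange argument goes through, and the corollary follows.
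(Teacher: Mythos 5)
Your overall route is the paper's route made explicit: the paper derives this corollary in one line, as an immediate consequence of the monotonicity property (Theorem \ref{thm:vacctime}), and your batch-by-batch exchange argument is the natural way to formalize that reduction. Two of your additions are sound and go beyond what the paper writes down. First, restricting to ordered deployment times $t_j \ge \max(\tau_j, t_{j-1})$ is harmless, since vaccines are interchangeable and batches can be relabelled. Second, your justification for invoking the machinery at an intermediate state is correct in spirit: uniform vaccination leaves the susceptible pool's spreading distribution unchanged, and an infection removes a node with probability depending on its $S$-value only (not its own $I$-value), so conditioned on $S=s$ the removal is independent of $I$ and the monotonicity of $\varphi$ --- hence the convexity of $R$ for the residual pool --- is preserved. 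The paper never addresses this point, and a complete proof does need it.

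There is, however, one genuine gap: your sentence ``the two evolutions form exactly the setting of Theorem \ref{thm:vacctime} applied to the residual system'' fails for every $k<m$. After the divergence point, batches $k+1,\dots,m$ are deployed in \emph{both} coupled trajectories, whereas Theorem \ref{thm:vacctime} and Claim \ref{clm:vacctime} --- and in particular Lemma \ref{lemrvactimet}, on which they rest --- are proved for an evolution containing exactly one batch followed by purely natural spreading; the identity $\hat{R}(i,x;n)=R(n_{i,x})\cdot d_{i,x}$ is simply false once further vaccinations occur after step $i+x$. The shared later batches are not automatically innocuous: under the paper's step indexing both trajectories have $N-n$ susceptibles at step $n$, so a common uniform batch multiplies both reproduction numbers by identical linear factors and does preserve the pointwise inequality \emph{during} the batch; but once the batch ends, the two susceptible pools have different spreading distributions, and a pointwise inequality of $\hat{R}$ at a single step does not by itself control the subsequent natural decline (the future trajectory of $\hat{R}$ depends on the pool's distribution, not merely on its current value). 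To close the argument you would need a schedule version of Lemma \ref{lemrvactimet} --- every deployment schedule's $\hat{R}$ is a scaled affine reparametrization of the natural $R$, obtained by composing one rescaling per batch --- and then a redo of the convexity comparison of Claim \ref{clm:vacctime} for two schedules differing in one batch's timing; alternatively, strengthen Claim \ref{clm:vacctime} to a stochastic-dominance statement between the susceptible distributions, which \emph{is} preserved by shared uniform batches. In fairness, the paper's own ``follows directly'' glosses exactly this multi-batch interleaving, but since your proposal makes the exchange step explicit, the unjustified invocation of Theorem \ref{thm:vacctime} in it is a real hole rather than a stylistic shortcut.
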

Due to the objective function being additive on the regions ($\sum_{i=1}^{k}H_i(v_i)$), by applying the corollary in each of the regions an optimal solution for the combined problem of allocation and timing will be to deploy each batch of vaccines once it is available, and to allocate it to the regions once it is available according to Algorithm \ref{alg:ALGDPgeneral}.

\section{{Numerical Evaluations and Experiments}}\label{sec:sec5}
We use numerical evaluations to study the sensitivity of the performance of the attacked systems to the heterogeneity level of the different regions of the system, and to the chosen deployment strategy. 
{We use \textit{gamma} and \textit{power-law} degree distributions\footnote{{Many large real-world networks, including the WWW \cite{albert1999diameter}, the Internet \cite{faloutsos1999power}, and other computer networks \cite{tilovska2011performance, nguyen2011analysis} have shown to follow a power-law tail degree distribution; i.e., for large degree $k$, the fraction of nodes having this degree is $P(k) \sim k^{-\lambda}$. The parameter $\lambda$ is called the power-law exponent and its value is typically in the range $2 < \lambda < 3$.}}, with varying shape or exponent parameters, in order to inspect varying heterogeneity levels.}
	
We begin with focusing on a single region of the system.
Figure \ref{fig:discus}(a)depicts the marginal revenue (measured in reduction of the  infections number) gained from allocating an additional vaccine to the region\footnote{In fact, note that Figure \ref{fig:discus}(a) depicts the $\Delta()$ values from Algorithm \ref{alg:ALGDPgeneral}}. The marginal revenue decreases as the allocation size increases, as expected (Section \ref{sec:sec3}); Figure \ref{fig:discus}(b) depicts the effect of the deployment timing and asserts that any delay results with an increase in the number of infection cases, as expected (Section \ref{sec:sec4}). Note that as the heterogeneity level of the region increases, the timing effect increases.

\begin{figure}[h]
	\centering
	\includegraphics[width=0.85\linewidth]{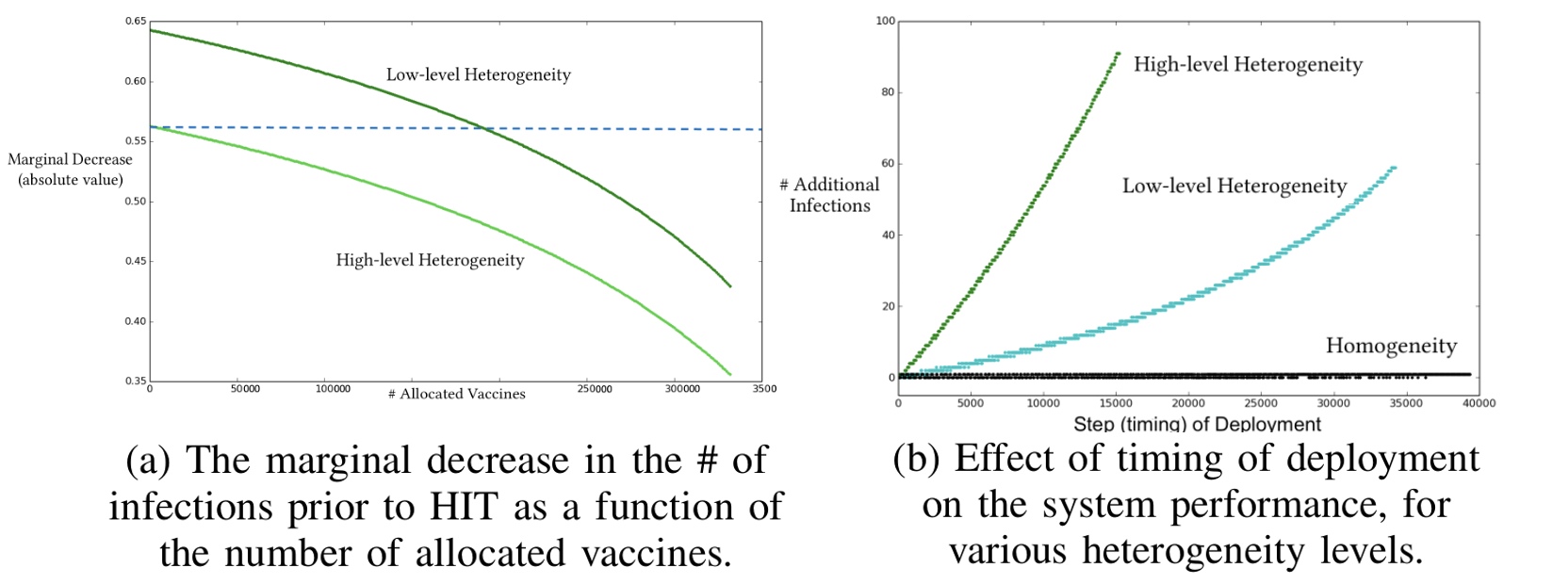}
	\caption{Inspecting deployment in a single region of the system.}
	\label{fig:discus}
\end{figure}
\noindent
We next move to inspect the effect of the allocation strategy on the resultant HIT and overall number of infections in a bi-regional system with equal region sizes (number of nodes).

We repeat the experiment for two cases: (1) The distribution of both regions is sampled according to the Barabási–Albert model (i.e., both regions follows power law of the form $P(k)\sim k^{{-3}}\,$) \cite{albert2002statistical}; (2) Region 1 follows a truncated power law distribution with exponent $\lambda = 1.5$, while Region 2 has $\lambda = 4.5$ (the spreading values are normalized such that both regions posses the same {mean} spreading degree and $R_0$ while having different variance and distributions).

\begin{figure}[h]
	\centering
	\includegraphics[width=0.85\linewidth]{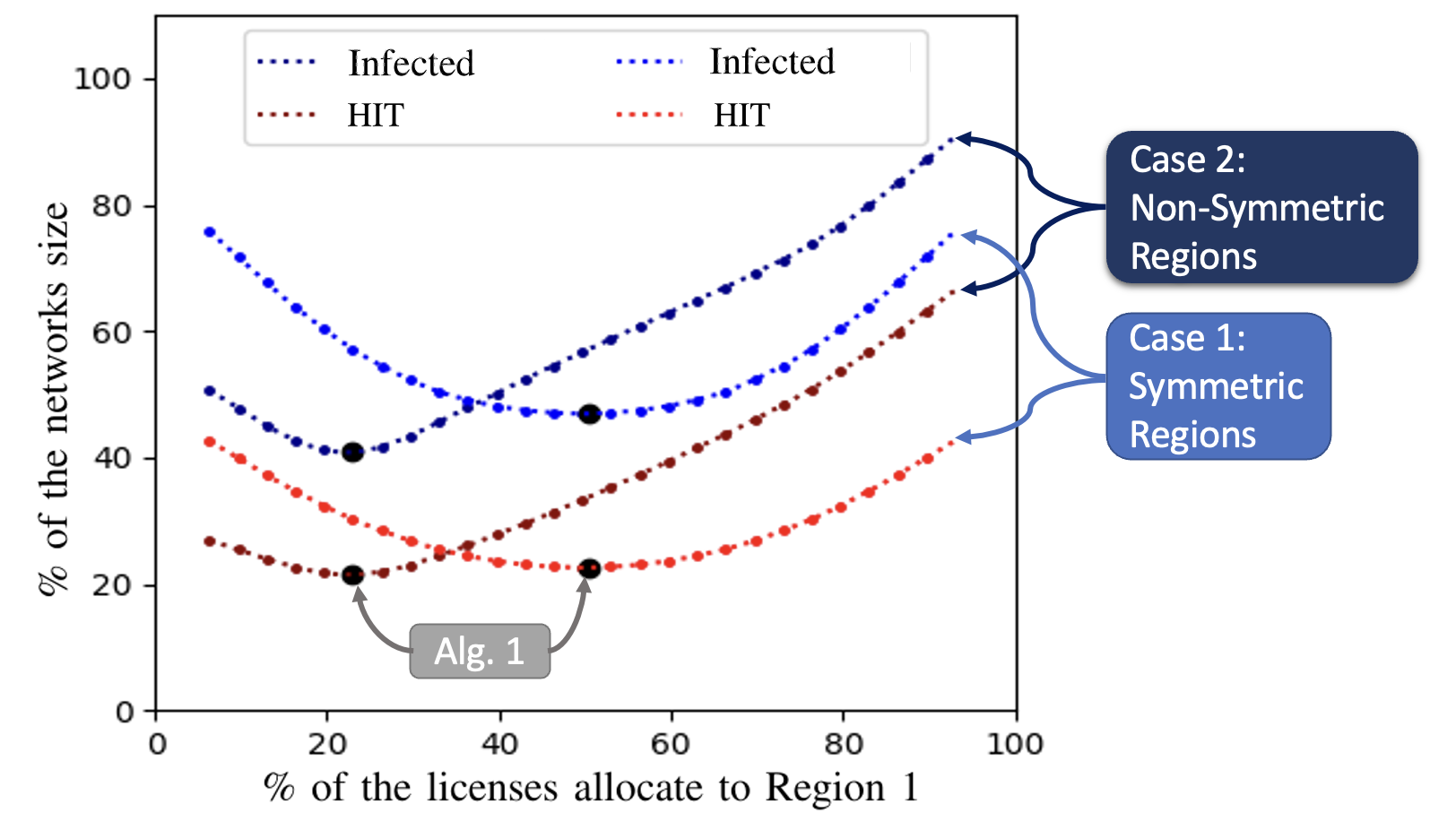}
	\caption{The values (in percentage) of the HIT and total number of infections in a bi-regional system as a function of the allocation (\% of the available licenses allocated to Region 1).}
	\label{fig:discuss_sens}
\end{figure}

Figure \ref{fig:discuss_sens} depicts the HIT and the overall number of infections as a function of the allocation. The allocation is determined by the percentage of available licenses allocated to Region 1 (while the rest are assigned to Region 2). 
In the symmetric case, in which both regions share the same spreading distribution, the optimal allocation is symmetric (the minimum is reached at 50\% on the x-axis), in accordance with the convexity property.
In the non-symmetric case, the minimization of the HIT is reached by allocating a greater amount of licenses to the less heterogeneous region.

The allocations derived by Algorithm \ref{alg:ALGDPgeneral} are marked in black on the figure and those correspond to the minimum value of the HIT in the each case's plot.
Note that those solution minimize not only the HIT, but also the overall number of infection cases (the blue lines).
As can be observed, the system performance is highly sensitive to the heterogeneity level of the population;
the difference in the performance of the optimal and an arbitrary strategies might reach tens of percents.

\section{{Summary}}\label{sec:sec6}
We studied the effect of anti-virus deployment strategies on the progression and damage cause by spreading computer viruses in heterogeneous networks.
In particular, we analyze the effect of the allocation quantity and the deployment timing on the herd immunity acquired by the network.
We proposed a very efficient (greedy) optimal allocation algorithm which allocates anti-virus licensees in multi-region systems, and proposed vaccination deployment strategies both in the offline (prior to the attack) and online (during the attack) scenarios.

The effects of allowing some weak connection between nodes in different regions, in the light of heterogeneity, is being studied in an on going research.

\section{Acknowledgment}
This research was supported in part by the Israel Science Foundation (grant No. 2482/21) and by the Blavatnik Family Foundation.

\bibliographystyle{abbrv}
\bibliography{avallocatiob_noms2023}

\end{document}